\title{Arbitrary Pattern Formation on a Continuous Circle by Oblivious Robot Swarm} %TODO Please add
\author{Brati Mondal}{Department of Mathematics, Jadavpur University, Kolkata-700032, India}{bratim.math.rs@jadavpuruniversity.in}{https://orcid.org/0009-0001-3017-9924}{(Optional) author-specific funding acknowledgements}%TODO mandatory, please use full name; only 1 author per \author macro; first two parameters are mandatory, other parameters can be empty. Please provide at least the name of the affiliation and the country. The full address is optional. Use additional curly braces to indicate the correct name splitting when the last name consists of multiple name parts.
\author{Pritam Goswami}{Department of Mathematics, Jadavpur University, Kolkata-700032, India}{pritamgoswami.math.rs@jadavpuruniversity.in}{https://orcid.org/0000-0002-0546-3894}{[funding]}
\author{Avisek Sharma}{Department of Mathematics, Jadavpur University, Kolkata-700032, India}{aviseks.math.rs@jadavpuruniversity.in}{https://orcid.org/0000-0001-8940-392X}{[funding]}
\author{Buddhadeb Sau}{Department of Mathematics, Jadavpur University, Kolkata-700032, India}{buddhadeb.sau@jadavpuruniversity.in}{https://orcid.org/0000-0001-7008-6135}{[funding]}
\authorrunning{B.Mondal, P.Goswami, A.Sharma and B.Sau} %TODO mandatory. First: Use abbreviated first/middle names. Second (only in severe cases): Use first author plus 'et al.'
\keywords{APF, Continuous circle, Oblivious, mobile robots, distributed algorithm.} %TODO mandatory; please add comma-separated list of keywords
\begin{document}

\maketitle

%TODO mandatory: add short abstract of the document
\begin{abstract}
In the field of distributed system, Arbitrary Pattern Formation (APF) problem is an extensively studied problem. The purpose of APF is to design an algorithm to move a swarm of robots to a particular position on an environment (discrete or continuous) such that the swarm can form a specific but arbitrary pattern given previously to every robot as an input. 
In this paper the solvability of the APF problem on a continuous circle has been discussed for a swarm of oblivious and silent robots without chirality under a semi synchronous scheduler. Firstly a class of configurations called  \textit{Formable Configuration}($FC$) has been provided which is necessary to solve the APF problem on a continuous circle. Then considering the initial configuration to be an $FC$, an deterministic and distributed algorithm has been provided that solves the APF problem for $n$ robots on a continuous circle of fixed radius within $O(n)$ epochs without collision.
\end{abstract}
\section{Introduction}
\label{Sec:1}
Applications of distributed systems and their relevant problems have gained substantial importance in the last two decades. Unlike a centralized system, using a swarm of inexpensive, simple robots to do a task is more cost-effective, robust, and scalable. These swarms of robots have many applications, like rescue operations, military operations, search and surveillance, disaster management, cleaning large surfaces, and so on. 

Researchers are interested in studies about using swarm robots with minimum capabilities to do some specific tasks like \textsc{Gathering}, \textsc{Arbitrary Pattern Formation}, \textsc{Dispersion}, \textsc{Exploration}, \textsc{Scattering} etc. The robots are \textit{autonomous} (have no centralized controller), \textit{anonymous} (have no IDs), and \textit{homogeneous} (have the same capabilities and execute the same algorithm). Depending on the capabilities of robots, there are four types of robot models: $\mathcal{OBLOT}$, $\mathcal{FSTA}$, $\mathcal{FCOM}$, $\mathcal{LUMI}$. In the $\mathcal{OBLOT}$ model, robots do not have any persistent memory of their previous state, i.e. \textit{oblivious} and they can't communicate with each other, i.e. \textit{silent}. In the $\mathcal{FSTA}$ model, robots are not oblivious but silent. In the $\mathcal{FCOM}$ model, robots are oblivious but not silent. In the $\mathcal{LUMI}$ model, robots are neither oblivious nor silent.

Each robot executes a \textsc{Look-Compute-Move} (LCM) cycle after activation. In \textsc{Look} phase, the robot takes a snapshot of its surroundings and collects the required information. Then the robot calculates the target using that information in the \textsc{Compute} phase and moves to the destination in \textsc{Move} phase. A scheduler is the controller of the activation of robots. There are three types of schedulers: \textit{Fully synchronous} (\textsc{FSync}) scheduler, \textit{Semi synchronous} (\textsc{SSync}) scheduler, \textit{Asynchronous} (\textsc{ASync}) scheduler. In \textsc{FSync} scheduler, time is divided into global rounds of the same duration, and each robot is activated in every round and executes the LCM cycle. In \textsc{SSync} scheduler, time is also divided into global rounds of the same duration as \textsc{FSync}, but all robots may not be activated at the beginning of each round. In \textsc{ASync} scheduler, robots are activated independently, and the LCM cycle is not synchronised here.

The problem considered here is the Arbitrary Pattern Formation (APF) problem, in which a swarm of robots is deployed in an environment (discrete or continuous domain). The APF problem aims to design an algorithm such that robots move to a particular position and form a specific but arbitrary pattern, which is already given to every robot as input. There is a vast literature on APF in both discrete and continuous domains (\cite{BAKS20,BDS2021,BOSEKAS21,BQTS2016,BQTS2018,SAGA2023,SGA2019,CFSN20,DPV2010,FPSW2008,KGGS22,SY1996,YS2010}). Most of the works of APF in the continuous domain are considered on the Euclidean plane. There are other sorts of environments that are included in the continuous domain, e.g., any closed curve embedded on the plane where robots can only move on that curve. In real life, such environments also exist and are hugely applicable in different scenarios such as roads, railway tracks, tunnels, waterways, etc. Another example of this kind of environment is a circle of fixed radius embedded in a plane. Studying this problem is interesting because the solution can be extended to all other closed curves. Thus, in this paper, we have considered the problem of Arbitrary Pattern Formation (APF) on a circle. 

\section{Related Works and Our Contribution}
\label{Sec:2}
\subsection{Related work}
In swarm robotics, Arbitrary Pattern Formation (APF) problem is a hugely studied problem. This problem was first introduced by Suzuki and Yamashita in \cite{SY1996} on the Euclidean plane. Later, they characterised the geometric patterns formable by oblivious and anonymous robots in \cite{YS2010} for fully synchronous and asynchronous schedulers. After that, this problem has been considered in different environments on continuous and discrete domains (\cite{BAKS20,BDS2021,BOSEKAS21,BQTS2016,BQTS2018,SAGA2023,SGA2019,CFSN20,DPV2010,FPSW2008,KGGS22}). 

In the continuous domain, most of the works that consider arbitrary pattern formation are done on the Euclidean plane under different settings. In \cite{FPSW2008} Flocchini discussed the solvability of the pattern formation problem by considering oblivious robots with fully asynchronous schedulers. They showed that if the robots have no common agreement with their environment, they are unable to form an arbitrary pattern. Moreover, if the robots have one axis agreement, then any odd number of robots can form an arbitrary pattern, whereas the even number can't. Further, if the robots have both axis agreement, then any set of robots can form any pattern. They proved that it is possible to elect a leader if it is possible to form any pattern for $n \ge 3$ robots. The converse of this result is proved, and a relationship between leader election and arbitrary pattern formation for oblivious robots in the asynchronous scheduler is studied in \cite{DPV2010}. The authors showed that for $n \ge 4$ robots with chirality (respectively for $n \ge 5$ without chirality), the Arbitrary pattern formation problem is solvable if and only if leader election is possible. In \cite{BQTS2016} the authors proposed a probabilistic pattern formation algorithm for oblivious robots under an asynchronous scheduler without chirality. Their protocol is a combination of two phases: a probabilistic leader election phase and a deterministic pattern formation phase. Later in \cite{BQTS2018} they proposed a new geometric invariant that exists in any configuration with four oblivious anonymous mobile robots to solve arbitrary pattern formation problems with or without the common chirality assumption. In \cite{SGA2019} authors studied Embedded Pattern Formation without chirality with oblivious robots. They characterised when the problem can be solved by a deterministic algorithm and when it is unsolvable. In \cite{BDS2021} authors studied the APF problem for robots whose movements can be inaccurate and the formed pattern is very close to the given pattern. In \cite{BOSEKAS21} the authors provided a deterministic algorithm in the Euclidean plane with asynchronous opaque robots.

% In g, arbitrary pattern formation was first introduced in \cite{BAKS20} on an infinite grid by oblivious mobile robots in an asynchronous scheduler with full visibility. In \cite{SAGA2023} the authors solved the APF problem for oblivious robots in an asynchronous scheduler on a triangular grid, which also works for square and hexagonal grids.

Note that all the work on arbitrary pattern formation considering the continuous domain has been done only for the Euclidean plane, where the robots can arbitrarily move from one point to another via infinitely many paths. But there are some environments in the continuous domain in which the movements of robots from one point to another are restricted to a finite number of possible paths. A continuous circle of fixed radius is one such environment. To the best of our knowledge, there are some works (\cite{CGKK11,CKPT14,FKKR17,FKKSY19,GSGS23,DUVY20}) which considered continuous circle as their corresponding environment. The problems of patrolling, gathering, and rendezvous are the main focus of these works. But none of them considered the problem of APF on the continuous circle.

\subsection{Our Contribution}
In this work, we aim to solve the problem of Arbitrary Pattern Formation (APF) on a continuous circle by oblivious and silent mobile robots with full visibility under a semi-synchronous scheduler. To the best of our knowledge, the APF problem has not yet been considered on a continuous circle. So in this paper, we have considered this problem for the first time. Here the robots do not agree with a particular direction i.e. robots have no chirality. The movements of robots are restricted only in two directions, clockwise and anti-clockwise from any point. So, avoiding collision in a circle is more difficult than avoiding collision on a plane.

The robot model considered here is the weakest $\mathcal{OBLOT}$ model. In this problem, there is no particular landmark or door from which the robots enter. Here we characterise the class of initial configuration for which this APF problem is solvable. We name this class of configurations, Formable Configuration (FC). We have shown that FC contains either asymmetric configuration or configuration having only reflectional symmetry and there exists at least one line of reflection having a robot on it. Then we have provided a deterministic and distributed algorithm $APF\_CIRCLE$ which solves this problem for any FC as the initial configuration within $O(n)$ epochs under a semi synchronous (\textsc{SSync}) scheduler.

Observe that, if a configuration is rotationally symmetric then it is not an FC and hence APF can't be solvable for this kind of configuration. So, maintaining a rotationally asymmetric configuration is necessary throughout an execution of any algorithm which solves the problem. There are some known techniques in \cite{BAKS20,DUVY20}, that maintain the asymmetry of the configuration. For example in an infinite grid, the asymmetry is maintained by moving a particular robot to a certain distance (\cite{BAKS20}). In a continuous circle we can not adapt this technique due to the bounded and circular nature of the environment. Furthermore, in \cite{DUVY20}, authors solved the problem of gathering under limited vision on a continuous circle. The algorithm proposed in this paper also requires to maintain  the rotational asymmetry of the configuration. For this purpose, the authors exploited the global weak multiplicity detection  and chirality agreement, in their work. But here in this work, we can not have the luxury of having multiplicity points as the target pattern does not include any multiplicity points and the robots are oblivious. So, for maintaining the asymmetry in circle, we have came up with a new technique that have been used in designing the deterministic algorithm $APF\_CIRCLE$, presented in this paper.

\subsection{Road map to the Paper}
Section~\ref{Sec:1} is dedicated to introducing the problem and then in Section~\ref{Sec:2} some related works and the contribution of this work has been described. In Section~\ref{Sec:3}, the problem definition along with the model is discussed. In Section~\ref{Sec:4}, Some preliminaries such as some definitions, notations, and some results have been established. In Section~\ref{Sec:5}, the leader election and target pattern embedding has been described. Section~\ref{Sec:6} is dedicated to describing the provided algorithm along with the correctness results and finally, this work is concluded in Section~\ref{Sec:7}.

\section{Model and Problem Definition }
\label{Sec:3}
\subsection{Problem Definition}
 Let $\mathcal{CIR}$ be a continuous circle of fixed radius. Let $n$, $n \ge 3$ robots resides on the perimeter of $\mathcal{CIR}$. The robots can move freely on the circle. A sequence of angular distances $\beta_0,\beta_1, \beta_2\dots, \beta_{n-1}$ is given to all the robots as input which is the target pattern such that the sum of the angles of the sequence is equal to $2\pi$. The problem is to design a distributed algorithm for the robots so that by finite execution of the algorithm, the robots move in such locations on $\mathcal{CIR}$, such that the final configuration has the following property:
 
 \begin{itemize}
     \item [\ding{71}] There exist a robot, say $r_0$ and a direction $\mathcal{D} \in \{$ clockwise, anticlockwise$\}$ such that the angular distance between the $i$-th and $(i+1)$-th robot in the direction $\mathcal{D}$ (denoted as $r_i$ and $r_{i+1}$ respectively) is $\beta_i$, where all the indices are considered under modulo $n$.
 \end{itemize}
 
 % there exist one robot (say $r_0$) and a direction $\mathcal{D} \in \{$ clockwise, anticlockwise$\}$, starting from that the sequence of angular distances between robots is equal to the given sequence either in clockwise or in anticlockwise direction .

\subsection{Model}

    \subsubsection{Robot Model:}  All robots are placed on the perimeter of a circle, say $\mathcal{CIR}$. Here the robots can move only on the perimeter of the circle. Robots have no particular orientation (i.e., no agreement on clockwise or anticlockwise direction). Robots have full visibility of the circle. The initial configuration is rotationally asymmetric. The movements of the robots are rigid i.e. robots always moves to its destination in a particular round. Robots have following properties-
\begin{itemize}
    \item[] \textit{Autonomous:} Robots don't have any centralised controller.
    \item[] \textit{Anonymous:} Robots have no IDs.
    \item[] \textit{Homogeneous:} All robots have same capabilities and execute the same algorithm.
    \item[] \textit{Oblivious:} Robots have no persistent memory.
    \item[] \textit{Silent:} Robots have no means of communications.
    \item[] \textit{Visibility:} Robots have full visibility of the circle i.e., robots can see all other robots on the circle.
\end{itemize}

\paragraph{\textbf{LCM cycle:}} Each robot executes a cycle of \textsc{Look-Compute-Move}(LCM) phases upon activation.

\begin{itemize}
    \item[] \textit{LOOK:} In look phase a robot takes a snapshot of its surroundings and gets the location of other robots on the circle according to its own coordinate .
    \item[] \textit{COMPUTE:} Robot determines target location using the snapshot of the look phase as the input by executing the provided algorithm.
    \item[] \textit{MOVE:}  In move phase robot moves to the destination point calculated in the compute phase.
\end{itemize}

\subsubsection{Scheduler Model:}The activation of robots are controlled by an entity called scheduler. Depending on the activation timing there are three types of schedulers: \\
\begin{itemize}
    \item[] \textit{Fully synchronous} (\textsc{FSync}): In fully synchronous scheduler time is divided into rounds of equal length and all robots are activated at the beginning of every round and performs the LCM cycle synchronously.\\
    
    \item[] \textit{Semi synchronous} (\textsc{SSync}): Similar to fully synchronous scheduler here also time is divided into rounds of equal length. But all robots might not get activated at the beginning of a particular round. In a particular round the activated robots perform the LCM cycle synchronously. Note that semi synchronous scheduler is more generalised than fully synchronous scheduler.\\
    
    \item[] \textit{Asynchronous} (\textsc{ASync}): In asynchronous scheduler time is not divided into rounds like fully synchronous and semi synchronous scheduler. Robots are activated independently. In a particular moment of time some robots may be in Look phase, some in Compute phase, some in Move phase or some may be idle. Asynchronous scheduler is the most general among all the schedulers. 
    
    In this paper, Semi Synchronous (\textsc{SSync}) scheduler has been considered to solve the problem.
\end{itemize}

\section{Prelimineries}
\label{Sec:4}
In this section, we first justify the reason for assuming the initial configuration being rotationally asymmetric. Before that let us first define  what a configuration is and what does it mean by the phrase ``configuration is rotationally asymmetric''.

\begin{definition}[$(A, B)_{\mathcal{D}}$]
    \label{def:angle}
    Let $A$ and $B$ be two points on the circle $\mathcal{CIR}$. Let $\mathcal{D}$ be a direction either clockwise or anticlockwise. Then $(A,B)_{\mathcal{D}}$  denotes the angular distance from point $A$ to point $B$ in the direction $\mathcal{D}$.
\end{definition}

\begin{definition}[Configuration]
    \label{def:config}
    A configuration is a set of $n$ points on the circle  $\mathcal{CIR}$ such that each point in the set consists exactly one robot.
\end{definition}

\begin{definition}[Rotationally Asymmetric Configuration]
    \label{def:asymmetry}
    A configuration is called a rotationally asymmetric configuration if there does not exist any non-trivial (less than $2\pi$) rotation of the circle $\mathcal{CIR}$ such that the configuration remains same after and before the rotation.
\end{definition}
\begin{definition}[Reflectionally Symmetric Configuration]
    \label{def:reflectionalSym}
    We call a configuration reflectionlly symmetric if
    \begin{enumerate}
        \item there exist a straight line called "\textit{line of reflection}" , say $L$, passing through the center of the circle and intersecting it at two points, say $A$ and $B$. $L$ divides the circle into two halves.
        \item for any robot $r$ located on any of the two halves there exist another robot $r'$ on the other half such that $(r,x)_{\mathcal{D}} = (r',x)_{\mathcal{D}'}$ where, $D$ is either the clockwise or, the anticlockwise direction, $D'$ is the opposite direction of $D$ and $x \in \{A,B\}$ 
    \end{enumerate}
\end{definition}

A configuration is called an \textit{Asymmetric} if the configuration is both rotationally and reflectionally asymmetric.
\begin{proposition}
    \label{Prop:rotationallySymmetricImpossible} There is no deterministic distributed algorithm that solves arbitrary pattern formation problem on a continuous circle if the initial configuration is rotationally symmetric.
\end{proposition}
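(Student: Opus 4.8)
The plan is to argue that a deterministic algorithm executed by anonymous, oblivious robots without chirality can never destroy an existing rotational symmetry of the configuration, and then to observe that an arbitrary target pattern need not possess that symmetry. Since the scheduler in an impossibility argument is adversarial, I would fix the fully synchronous activation pattern, which is a legal \textsc{SSync} schedule in which every robot is activated at the start of every round; it then suffices to exhibit a single target pattern that the algorithm fails to form under this schedule.

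First I would set up the symmetry formally. Suppose the initial configuration $C_0$ on $\mathcal{CIR}$ is invariant under a nontrivial rotation $\rho$ of the circle, say by angle $2\pi/k$ for some integer $k \ge 2$. Because a nonzero rotation of the circle fixes only the centre and never a point of the perimeter, the cyclic group generated by $\rho$ acts on the robot positions with trivial stabilizers, so the robots split into orbits of size exactly $k$; in particular $k \mid n$ and the $k$ copies of each robot occupy distinct points.

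The heart of the proof is an invariance claim: if $C_t$ is $\rho$-symmetric, then so is $C_{t+1}$. I would prove this by pairing each robot $r$ at position $p$ with the robot $r'$ sitting at $\rho(p)$. Since $\rho$ is a symmetry of $C_t$, the snapshot that $r'$ records from $\rho(p)$ is exactly the $\rho$-rotation of the snapshot $r$ records from $p$. As all robots are anonymous and execute the same deterministic rule, and as the absence of chirality means the rule cannot privilege either orientation, the target computed by $r'$ is forced to be the $\rho$-image of the target computed by $r$. Under full synchrony both robots move in the same round, so every symmetric pair lands on a symmetric pair of points and $C_{t+1}$ is again $\rho$-symmetric. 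By induction, every configuration reachable from $C_0$ retains $\rho$ as a symmetry.

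Finally I would close the argument by choosing the input sequence $\beta_0,\dots,\beta_{n-1}$ so that the target pattern is rotationally asymmetric, for instance by making all the gaps distinct, which is possible for every $n>1$. Such a pattern admits no nontrivial rotational symmetry, hence in particular is not fixed by $\rho$, so it lies outside the set of configurations reachable from $C_0$. This contradicts the assumption that a deterministic algorithm solves APF for every input pattern, completing the proof. The step I expect to be most delicate is the invariance claim, specifically making precise that the snapshot transforms covariantly under $\rho$ and that the no-chirality assumption genuinely prevents a robot from exploiting handedness to break the tie; some additional care is needed when a robot's own view happens to be reflection-symmetric, but this only introduces further ambiguity and cannot help break the rotational symmetry.
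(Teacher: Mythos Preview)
Your proposal is correct and follows essentially the same route as the paper: fix the fully synchronous schedule, show that the $k$-fold rotational symmetry is preserved at every round, and conclude that an asymmetric target is unreachable; the paper simply delegates the invariance step to a cited result, whereas you spell the argument out directly. One minor remark: since rotations preserve orientation, the invariance claim already holds even for robots with chirality, so the no-chirality assumption is not actually the delicate point you anticipate.
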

\begin{proof}
Let $\mathcal{C}(0)$ be the initial configuration which is rotationally symmetric. Let $\mathcal{C}(0)$ has a $k-$fold symmetry i.e., for a rotation of $\frac{2\pi}{k}$ along the center, the configuration remains the same. By proposition 2.4 in \cite{DUVY20} if the scheduler is fully synchronous then for any algorithm $\mathcal{A}$, the new configuration will have a $k'-$fold symmetry after one execution of $\mathcal{A}$, where $k' \ge k$. Thus for any finite execution of $\mathcal{A}$, the configuration will always have a $k_1-$fold symmetry where $k_1 \ge k$. So if the target pattern is asymmetric it can not be formed by the robot swarm by finite execution of $\mathcal{A}$. Hence the result.
\end{proof}

\begin{proposition}
    \label{prop:reflectionalSymmetryImpossible} There is no deterministic algorithm which solves arbitrary pattern formation on a continuous circle without chirality if the initial configuration is reflectionally symmetric and there is no robot on a line of reflection.
\end{proposition}
\begin{proof}
    Let there be a deterministic algorithm $\mathcal{A}$ that solves the arbitrary pattern formation on a continuous circle without chirality. Let $C$ be a configuration which has a line of reflectional symmetry $L$ in a configuration such that $L$ does not contain any robot on it. Then we show that execution of algorithm $\mathcal{A}$ cannot destroy the reflection symmetry without coalition. This will imply $\mathcal{A}$ cannot form an asymmetric target pattern starting from $\mathcal{C}$. Since the line $L$ does not contain any robot then the total number of robots present is even. Each robot can be paired with its reflectional image with respect to $L$. We denote a pair as $[r,r']$. Suppose the adversary maintains a semi-synchronous scheduler where in a round only one such pair of robot is activated. Suppose a such pair $[r,r']$ is activated. If on activation $r$ decides to move a point $p$, then since $r$ and $r'$ have same view due to reflectional symmetry, so destination points of $r$ and $r'$ either are mirror images of each other with respect to $L$ or on the same point of $L$. If the destination points are on a point on $L$, then coalition takes place. Otherwise, after the completion of the move the new configuration still have the line of reflectional symmetry $L$. Therefore, the reflectional symmetry remains in every round if the mentioned activation schedule is considered.

\end{proof}

Next, we define some terminologies here in this section that will be needed to describe the algorithm provided in the next section.

\begin{definition}[ Set of Angle Sequences of a robot $r$] Let, $\mathcal{R}=$ $\{r_0, r_1, r_2,\dots,r _{n-1}\}$ be the set of robots placed consecutively in a fixed direction, say $\mathcal{D}$ (either clockwise or anticlockwise) on the circle $\mathcal{CIR}$. Let $\theta_{i \pmod{n}}$ be the angular distance from the location of robot $r_{i \pmod{n}}$ to the location of robot $r_{{i+1} \pmod{n}}$ in the direction $\mathcal{D}$. Then the set of angle sequences of the robot $r=r_0$, denoted as $\mathcal{AS}(r)$, is the set $\{\mathcal{AS}_D (r),\mathcal{AS}_{D'}(r)\}$ where, $\mathcal{D}'$ is the opposite direction of $\mathcal{D}$ and  $\mathcal{AS}_D (r)=(\theta_0, \theta_1,\dots,\theta_{n-1}$), $\mathcal{AS}_{D'} (r)=(\theta_{n-1}, \theta_{n-2},\dots,\theta_{0}$) are two angle sequences  in the direction $\mathcal{D}$ and $\mathcal{D'}$ respectively.
\end{definition}

Since the initial configuration is rotationally asymmetric then, for a fixed particular orientation (either clockwise or anti-clockwise) all the robots have different angle sequences \cite{DUVY20}. So, note that, for two robots, say $r_1$ and $r_2$ if $\mathcal{AS}_{\mathcal{D}_1}(r_1) \in \mathcal{AS}(r_1)$ is equal to $\mathcal{AS}_{\mathcal{D}_2}(r_2) \in \mathcal{AS}(r_2)$ then, $\mathcal{D}_1$ must be equals to $\mathcal{D}_2'$

\begin{definition}[Nominee] A robot $r_l$ is considered as the nominee if 
$$\min(\cup_{r\in \mathcal{R}} \mathcal{AS}(r)) \in \mathcal{AS}(r_l)$$   
\end{definition}

\begin{figure}
    \centering
    \includegraphics[height=4.5cm]{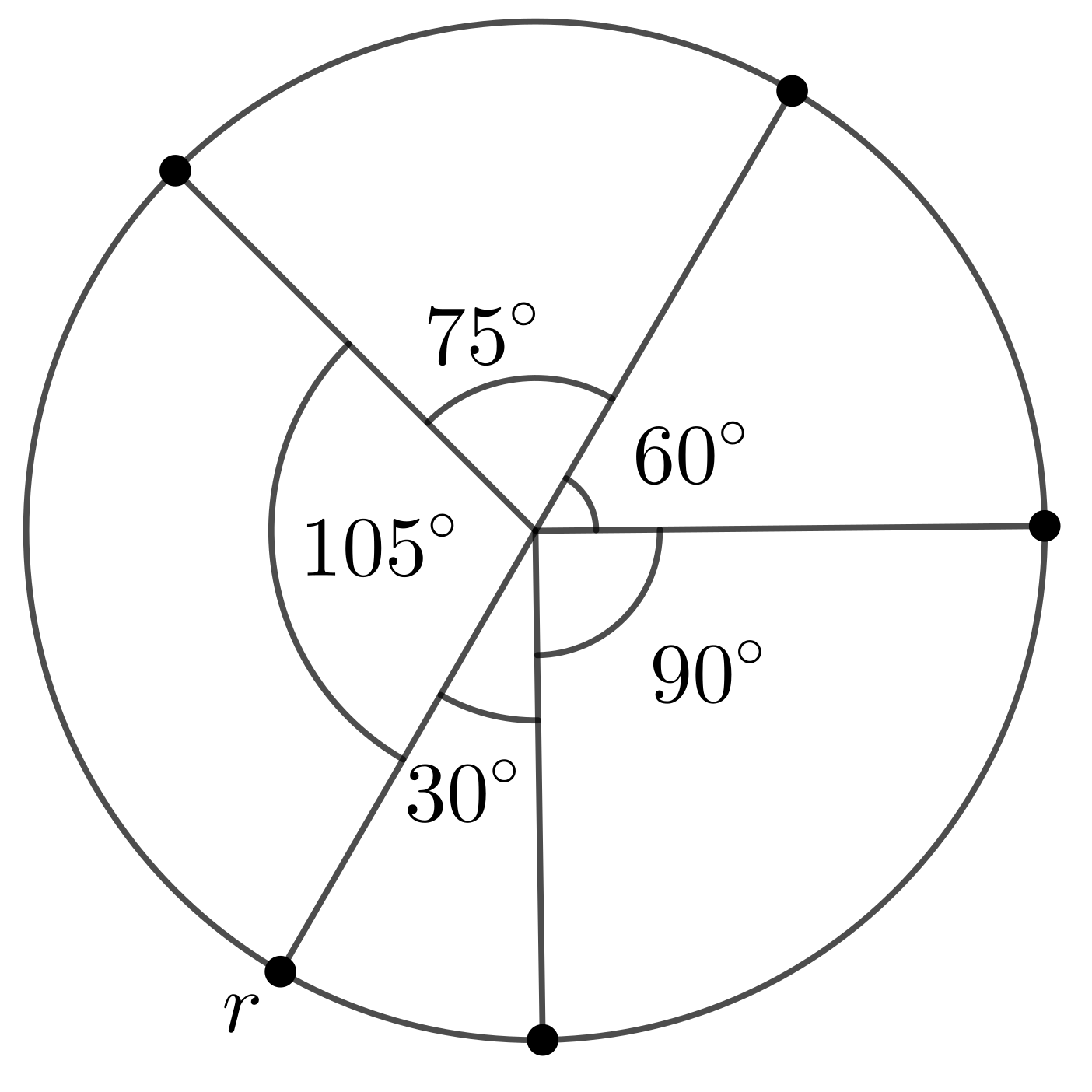}
    \caption{ Here, $\mathcal{AS}(r) = \{(30^{\circ} 90^{\circ} 60^{\circ} 75^{\circ} 105^{\circ}), (105^{\circ} 75^{\circ} 60^{\circ} 90^{\circ} 30^{\circ})\}$ is the set of angle sequences for the robot $r$. Note that $\mathcal{AS}(r)$ contains the minimum angle sequence $(30^{\circ} 90^{\circ} 60^{\circ} 75^{\circ} 105^{\circ})$ so, $r$ is a nominee. This configuration is also a single nominee configuration.} 
    \label{fig:SingleNominee}
\end{figure}

\begin{proposition}

\label{prop:atLeastoneAtmostTwo}
If the initial configuration is rotationally asymmetric then there are at least one and at most two nominees.
\end{proposition}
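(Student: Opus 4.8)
The plan is to exploit the cyclic structure of the angle sequences. First I would fix the reference direction $\mathcal{D}$ and observe that the $n$ clockwise sequences $\mathcal{AS}_{\mathcal{D}}(r_0),\dots,\mathcal{AS}_{\mathcal{D}}(r_{n-1})$ are precisely the $n$ cyclic rotations of the single tuple $(\theta_0,\theta_1,\dots,\theta_{n-1})$, and likewise the $n$ sequences in the opposite direction $\mathcal{D}'$ are the $n$ cyclic rotations of its reversal $(\theta_{n-1},\dots,\theta_0)$. Thus the whole collection $\cup_{r\in\mathcal{R}}\mathcal{AS}(r)$ splits into two families of $n$ rotations each, and the argument reduces to counting how many rotations in each family can equal the overall minimum.

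For the lower bound, I note that $\cup_{r\in\mathcal{R}}\mathcal{AS}(r)$ is a finite nonempty set of real tuples, hence it admits a lexicographic minimum $m$. Since $m$ is itself one of the angle sequences, $m\in\mathcal{AS}(r_l)$ for some robot $r_l$, and by definition $r_l$ is a nominee; so there is at least one.

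For the upper bound the crux is the distinctness fact already flagged before the proposition: for a fixed orientation all robots have pairwise distinct angle sequences. I would establish this by contraposition, turning a coincidence $\mathcal{AS}_{\mathcal{D}}(r_i)=\mathcal{AS}_{\mathcal{D}}(r_j)$ with $i\neq j$ into invariance of $(\theta_0,\dots,\theta_{n-1})$ under the cyclic shift by $j-i$ positions; this shift corresponds to rotating the whole configuration by the angle $\theta_i+\theta_{i+1}+\cdots+\theta_{j-1}$ onto itself, i.e.\ a nontrivial rotational symmetry, contradicting the standing assumption that the initial configuration is rotationally asymmetric. The same reasoning applies to the reversed tuple, so the sequences in direction $\mathcal{D}'$ are pairwise distinct as well. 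Granting distinctness within each family, at most one robot can realize $m$ as its clockwise sequence and at most one can realize $m$ as its counterclockwise sequence; the set of nominees is therefore contained in the union of these two singletons and has cardinality at most two.

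I would close by remarking that both possibilities genuinely arise: if $m$ occurs only within one of the two families, or if the clockwise and counterclockwise minimizers happen to be the same robot (the reflection-symmetric/palindromic case), there is a single nominee, while otherwise there are exactly two. The only non-routine step is the asymmetry-to-distinctness implication, where one must correctly identify a cyclic shift of the angle sequence with an honest rotation of the configuration; everything after that is bookkeeping over the two directional families.
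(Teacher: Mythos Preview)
Your argument is correct and follows essentially the same route as the paper: both reduce the upper bound to the fact that, in each fixed direction, the $n$ angle sequences are pairwise distinct (the paper invokes this from \cite{DUVY20}, whereas you supply the short contrapositive argument yourself), and then observe that the minimum can be attained by at most one robot per direction. The only cosmetic difference is that the paper phrases the ``at most two'' part as a contradiction from assuming three nominees, while you count directly; the content is the same.
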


\begin{proof}
Since the initial configuration is rotationally asymmetric, by a result stated in \cite{DUVY20} it can be said that all the robots have distinct angle sequences in a particular direction. Thus $n$ robots have $n$ distinct angle sequences in a particular direction. Similarly in opposite direction, there exists $n$ distinct angle sequence. Among those $2n$ angle sequences, at least one angle sequence must be minimum. If this minimum angle sequence belongs to $\mathcal{AS}(r)$, then $r$ is selected as the nominee. So the initial configuration must have at least one nominee.
\par Now, let it be assumed that there are more than two nominees in the initial configuration. Without loss of generality let there be three nominees in the initial configuration. Let the first nominee whose angle sequence is minimum, has the minimum angle sequence in a particular direction say, $\mathcal{D}$. Then the second nominee must get its minimum angle sequence in the direction $\mathcal{D}'$, the opposite direction of $\mathcal{D}$ (as no two robots can have minimum angle sequence in the same direction). Now, the third nominee must have its minimum angle sequence in the direction of either $\mathcal{D}$ or $\mathcal{D}'$. But this can't be possible because, in a particular direction, no two robots have the same angle sequence. So there can not be more than two nominees in the initial configuration.

\end{proof}

\begin{figure}[h]
    \centering
    \includegraphics[height=4.5cm]{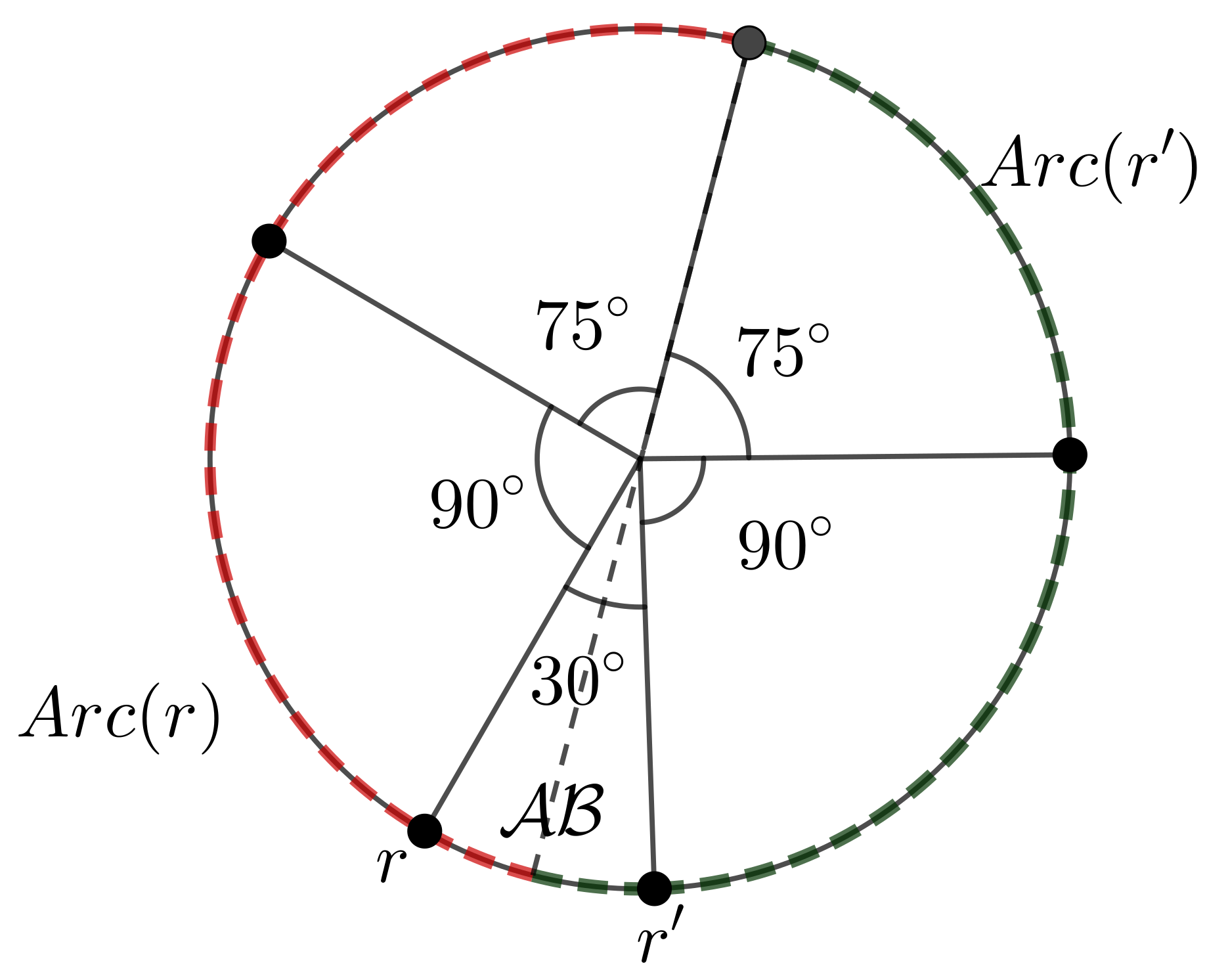}
    \caption{ A double nominee configuration where both $r$ and $r'$ are nominees. The angle bisector $\mathcal{AB}$ contains a robot. $Arc(r)$ is highlighted with red dotted line and $Arc(r')$ is highlighted with green dotted line.} 
    \label{fig:DoubleNominee}
\end{figure}

\begin{definition}[Single nominee configuration]
A rotationally asymmetric configuration is called a single nominee configuration if there is only one nominee.    
\end{definition}

\begin{definition}[Double nominee configuration]
A rotationally asymmetric configuration is called a double nominee configuration if there are two nominees.  
\end{definition}

\begin{definition}[Angle Bisector in a double nominee configuration] \label{def:angleBisector}
Let $r$ and $r'$ be two nominees in a double nominee configuration. The angle bisector of this configuration is defined as the straight line that bisects the angles formed by the robots $r$ and $r'$ and is denoted as $\mathcal{AB}$.
\end{definition}
In the future, the term ``angle bisector of a configuration" or the symbol $\mathcal{AB}$  will always be used for a double nominee configuration even if it is not mentioned explicitly. 

\begin{definition}[Arc of a nominee in a double nominee configuration]
    Let $r$ and $r'$ be two nominees in a double nominee configuration. Let $\mathcal{AB}$ be the angle bisector of the angle between $r$ and $r'$. Now, $\mathcal{AB}$ divides the circle into two arcs. Among these two arcs, the arc on which the robot $r$ is located except the points of $\mathcal{AB}$ is called the arc of the robot $r$ and is denoted as $Arc(r)$.
\end{definition}
\begin{proposition}
    \label{prop:asymISsinglenominee}
    An asymmetric configuration must be a single nominee configuration.
\end{proposition}
\begin{proof}
    Assume that the configuration is a double nominee configuration where $r_0$ and $r_0'$ be the two nominees.
    Let $L$ be the angle bisector $\mathcal{AB}$ which intersects the circle in two points, say $A$ and $B$. Let $\mathcal{D}$ be the direction such that $\mathcal{AS}_{\mathcal{D}}(r_0)$ is minimum in the configuration. Then $\mathcal{AS}_{\mathcal{D}}(r_0) =  \mathcal{AS}_{\mathcal{D}'}(r_0')$. Without loss of generality let $A$ be the first (among $A$ and $B$) point in the direction $\mathcal{D}$ from $r_0$. So, $A$ is also the first point among $A$ and $B$ from $r_0'$ in the direction $\mathcal{D}'$.  Thus, $(r_0,A)_{\mathcal{D}}=(r_0',A)_{\mathcal{D}'}$. Now, let $r$  be a robot on $Arc(r_0)$. Now, we have two cases.

    \textbf{Case I:} Let $r$ is on the arc joining $r_0$ and $A$ in the direction $\mathcal{D}$. Since $\mathcal{AS}_{\mathcal{D}}(r_0) =  \mathcal{AS}_{\mathcal{D}'}(r_0')$, there exists a robot $r'$ on $Arc(r_0')$ such that it is located on the arc joining $r_0'$ and $A$ in the direction $\mathcal{D}'$. Also, $(r_0,r)_{\mathcal{D}} = (r_0', r')_{\mathcal{D}'}$. So, $(r,A)_{\mathcal{D}} = (r_0,A)_{\mathcal{D}}-(r_0,r)_{\mathcal{D}} = (r_0',A)_{\mathcal{D}'}-(r_0',r')_{\mathcal{D}'} = (r',A)_{\mathcal{D}'}$. This is true for any $\mathcal{D} \in \{$clockwise direction, anticlockwise direction$\}$ and also for the point $B$. Hence the configuration has reflectional symmetry.

    \textbf{Case II:} Let $r$ is on the arc joining $r_0$ and $B$ in the direction $\mathcal{D}'$ from $r_0$. Since $\mathcal{AS}_{\mathcal{D}}(r_0) =  \mathcal{AS}_{\mathcal{D}'}(r_0')$, there exists a robot $r'$ on $Arc(r_0')$ such that it is located on the arc joining $r_0'$ and $B$ in the direction $\mathcal{D}$ and $(r_0,r)_{\mathcal{D'}} = (r_0', r')_{\mathcal{D}}$. Now, since $(r_0,A)_{\mathcal{D}} = (r_0',A)_{\mathcal{D}'}$, we have, $(r,A)_{\mathcal{D}} = (r,r_0)_{\mathcal{D}}+(r_0,A)_{\mathcal{D}} = (r',r_0')_{\mathcal{D}'}+(r_0',A)_{\mathcal{D}'} = (r',A)_{\mathcal{D}'}$. This is also true for any $\mathcal{D} \in \{$clockwise direction, anticlockwise direction$\}$ and also for the point $B$. Hence the configuration has reflectional symmetry.

    So, for both the cases we arrive at a contadiction. Thus an asymmetric configuratioin can not be a double nominee configuration.
\end{proof}

\begin{proposition}
\label{proposition:single nominee Is asymmetric}
    A single nominee configuration can not have reflectional symmetry.
\end{proposition}
\begin{proof}
    Let $\mathcal{C}$ be a single nominee configuration where $r_0$ be the nominee. If possible, let $\mathcal{C}$ has reflectional symmetry. Let $L$ be a line of reflection. We first claim that $r_0$ can not be on $L$. In this case let two neighbours of $r_0$ be $r_1$ in direction, say $\mathcal{D}$ and $r_{n-1}$ in $\mathcal{D'}$. Since $r_0$ is on $L$, $(R_0, R_1)_{\mathcal{D}} = (R_0, R_{n-1})_{\mathcal{D'}} =$ say $ \alpha_0$, where $R_0,R_1$ and $R_{n-1}$ are locations of $r_0, r_1$ and $r_{n-1}$ respectively. Also since $r_0$ is the nominee, $\alpha_0$ is the minimum angle in $\mathcal{C}$. Which implies $\mathcal{AS}_{\mathcal{D'}}(r_1)$ and $\mathcal{AS}_{\mathcal{D}}(r_{n-1})$ is smaller than $\mathcal{AS}_{\mathcal{D}}(r_0) = \mathcal{AS}_{\mathcal{D'}}(r_0)$ which is a contradiction. So, let $r_0$ is not on $L$. Let $r_0'$ be the reflection of $r_0$ along $L$. Now, if $r_0$ has its minimum angle sequence in the direction $\mathcal{D}$ then $\mathcal{AS}_{\mathcal{D}}(r_0) = \mathcal{AS}_{\mathcal{D'}}(r_0')$ which implies $\mathcal{C}$ is not an single nominee configuration contradicting our assumption. Thus we can conclude that a single nominee configuration must not have reflectional symmetry. 
\end{proof}

Now from definition, a single nominee configuration must be rotationally asymmetric. Thus from Proposition~\ref{proposition:single nominee Is asymmetric}, it can be concluded that a single nominee configuration must be asymmetric. Also, from Proposition~\ref{prop:asymISsinglenominee} we have the following theorem.  
\begin{theorem}
   \label{Theorem: Asym IFF single nominee} A configuration is asymmetric if and only if the configuration is a single nominee configuration. 
\end{theorem}

\begin{proposition}
    \label{prop: reflSymISdoublenominee}
    A configuration which is rotationally asymmetric, reflectionally symmetric must be a double nominee configuration .
\end{proposition}
\begin{proof}

    From Proposition~\ref{proposition:single nominee Is asymmetric} we have that if a configuration has reflectional symmetry then it can not be a single nominee configuration. Now, a rotationally asymmetric configuration can either be a single nominee configuration or a double nominee configuration. So, A rotationally asymmetric configuration having reflectional symmetry must be a double nominee configuration.
\end{proof}

Now from the above propositions (Proposition~\ref{prop:asymISsinglenominee} and Proposition~\ref{prop: reflSymISdoublenominee}) we can have the following result.
\begin{theorem}
    \label{thm: doublenominee iff rot Asy and reflect Sym}
    A configuration is a double nominee configuration if and only if the configuration is rotationally asymmetric and reflectionally symmetric.  
\end{theorem}

\begin{proposition}
    \label{prop:notLeaderConfigMeansRobotOnBisector}
   In a double nominee configuration with odd number of robots, exactly one robot must be located on the angle bisector $\mathcal{AB}$.
\end{proposition}
\begin{proof}
    By Theorem~\ref{thm: doublenominee iff rot Asy and reflect Sym}, a double nominee configuration must be roationally asymmetric, has reflectional symmetry and the angle bisector $\mathcal{AB}$ is the line of reflection without any nominee. Now, since the configuration has reflectional symmetry with respect to $\mathcal{AB}$, the total number of robots which are not on $\mathcal{AB}$ is even. Now, since it is given that the number of robots are odd, $\mathcal{AB}$ must contain exactly one robot.
    
\end{proof}

% \begin{definition}[Move Ready Robot]
%     In a leader configuration, let $r$ be the first robot from leader, say $r_0$ in the direction $\mathcal{D}_p$ which satisfies the following condition:
%     \begin{enumerate}
%         \item $r$ is not the first or second neighbour of leader $r_0$ in the direction $\mathcal{D}_p$.
%         \item  If $D$ is the destination of $r$ in direction $\mathcal{D}$ and $r'$ be the neighbour of $r$ in the direction $\mathcal{D}$. Then $(R,R')_{\mathcal{D}}-(R,D)_{\mathcal{D}}>\alpha_1$, where $\alpha_1$ is the angular distance between first and second neighbour of $r_0$ in the direction $\mathcal{D}_p$ and $R$ and $R'$ are the locations of $r$ and $r'$ respectively on the circle.
%     \end{enumerate}
%     Then $r$ is defined as the Move Ready robot of the configuration
   
% \end{definition}

\section{Leader Election and Target Embedding}
\label{Sec:5}
\subsection{Leader Election} In this section, the concept of the leader in different configuration is discussed briefly. For embedding the target pattern on the circle $\mathcal{CIR}$, a particular point and a fixed direction must be agreed upon by all the robots of $\mathcal{R}$. The position of the elected leader is here used as the point.
% \begin{definition}[Leader] \label{def:leader}
% A robot $r$ is said to be the leader if $r$ has unique minimum angle sequence.
% \end{definition}

% \begin{definition}[Leader configuration]\label{def:leaderConfig}
% A configuration is called a leader configuration if it has a leader.    
% \end{definition}

% \begin{definition}[Pivotal Direction]
% In a leader configuration, the direction in which the leader has the minimum angle sequence is called the pivotal direction. The pivotal direction is denoted as $\mathcal{D}_p$.
% \end{definition}

% \begin{figure}[h]
%    \centering
     % \includegraphics[height=4.5cm]{}
%     \caption{Here the configuration is a leader configuration and $r$ is the leader. The direction $\mathcal{D}_p$ is the direction in which $r$ has the minimum angle sequence.} 
%    \label{fig:leader}
% \end{figure}

For a single nominee configuration the unique nominee is considered to be the leader of the configuration. So, due to Proposition~\ref{prop:asymISsinglenominee}, for any asymmetric initial configuration, the unique nominee robot becomes the leader.

Now, if the configuration is symmetric then, there can be two types of symmetry.
\begin{itemize}
    \item rotational symmetry
    \item reflectional symmetry
\end{itemize}
 If a configuration has rotational symmetry then by Proposition~\ref{Prop:rotationallySymmetricImpossible} arbitrary pattern  formation is impossible. So, we assume that initial configuration is rotationally asymmetric. Thus by symmetric initial configuration only reflectional symmetry is considered. Now for a symmetric initial configuration let us consider the set $S_{Reflect}=\{r \in \mathcal{R} $ | $ r$ is on a line of reflection $\}$. Now for any $r \in S_{Reflect}$ we have an observation:

 \begin{observation}
 \label{obs:sameASbothDIRECTION}
 For any robot $r \in S_{Reflect},$ $\mathcal{AS}_{\mathcal{D}}(r) = \mathcal{AS}_{\mathcal{D}'}(r)$.
 \end{observation}
So, if $S_{Reflect} \ne \phi$ then, there must exist an unique $r_0 \in S_{Reflect}$ such that $\mathcal{AS}_{\mathcal{D}}(r_0) = \underset{r \in S_{Reflect}}{\min} \{\mathcal{AS}_{\mathcal{D}}(r) \cup \mathcal{AS}_{\mathcal{D}'}(r)\} = \underset{r \in S_{Reflect}}{\min} \mathcal{AS}_{\mathcal{D}}(r)$ (as the configuration is rotationally asymmetric $r_0$ is unique). So, for a symmetric configuration which is rotationally asymmetric but has reflectional symmetry, if $S_{Reflect} \ne \phi$ then we can elect a unique leader $r_0$, where $r_0 \in S_{Reflect}$ has the minimum angle sequence in $S_{Reflect}$. Note that, a double nominee configuration is rotationally asymmetric, has reflectional symmetry (due to Theorem~\ref{thm: doublenominee iff rot Asy and reflect Sym}). So, for a double nominee configuration, if $S_{Reflect} \ne \phi$ an unique leader can be elected from the set $S_{Reflect}$ as described earlier.
 
\begin{center}
\begin{tabular}{ |c|c|c|c| } 
\hline
\textbf{Configuration Classification} & \textbf{Leader}\\
\hline
Single nominee configuration & The unique nominee\\
\hline
Double nominee configuration & $r_0 \in S_{Reflect}$ such that\\ 
with $S_{Reflect} \ne \phi$ & $\mathcal{AS}_{\mathcal{D}}(r_0)=\underset{r \in S_{Reflect}}{\min} \mathcal{AS}_{\mathcal{D}}(r)$ \\
\hline
\end{tabular}
\end{center}
\subsection{Target Embedding}
From the above description it is clear that a unique leader can be elected for the mentioned configurations which is used to embed the  target pattern on $\mathcal{CIR}$  . Now the question is in which direction target pattern will be embedded. To answer this following two cases arise.

\textit{Case I:} Let $r_0$ be the elected leader in a single nominee configuration. Thus, $\mathcal{AS}_{\mathcal{D}}(r_0) \ne  \mathcal{AS}_{\mathcal{D}'}(r_0)$.
In this case, the angle sequences of the leader $r_0$ in the directions $\mathcal{D}$ and $\mathcal{D}'$ are different. So there must exists one direction in which the angle sequence of $r_0$ is minimum. This direction will be considered by all the robots for embedding the target pattern. Let the position of $r_0$ be denoted as $T_0$ and the direction in which $r_0$ has the smallest angle sequence is denoted as $\mathcal{D}$. Let the $j$-th target location on $\mathcal{CIR}$ from $T_0$ in the direction $\mathcal{D}$ is denoted by $T_j$, where $j \in \{0, 1, \dots, n-1\}$. The points are embedded in such a way on $\mathcal{CIR}$ that $(T_j, T_{j+1})_{\mathcal{D}} = \beta_j$ where the sequence $\beta_0, \beta_1, \dots, \beta_{n-1}$ is lexicographically smallest upto rotation of the input pattern given to the robots (all the indices are considered in modulo $n$).

\begin{figure}
    \centering
    \includegraphics[height=7cm]{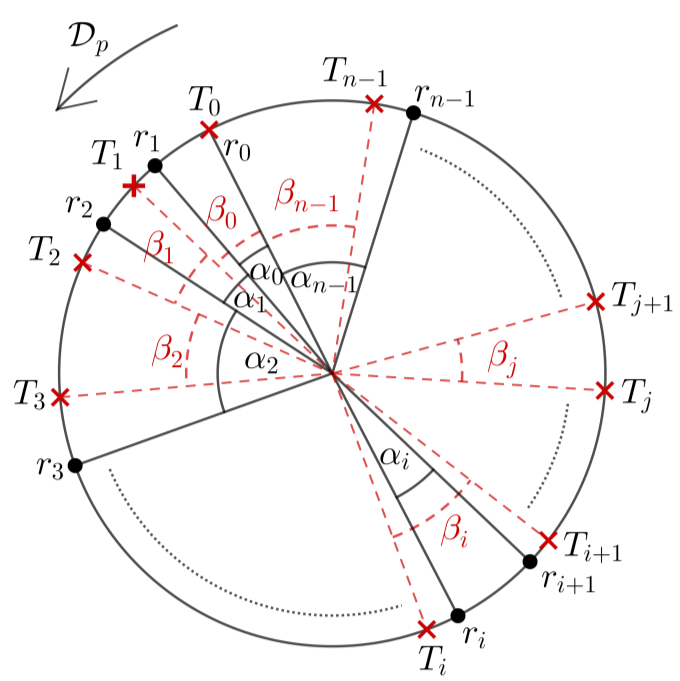}
    \caption{ $(\beta_0, \beta_1,\dots \beta_{n-1})$ is the smallest in lexicographic ordering of all possible sequences that can be formed from the input upto rotation. The sequence is embedded on the circle starting from the location of $r_0$ and in the direction $\mathcal{D}_p$. }
    \label{fig:targetembedding}
\end{figure}

\textit{Case II:} Let $r_0$ be the elected leader in a double nominee configuration. Then $\mathcal{AS}_{\mathcal{D}}(r_0) =  \mathcal{AS}_{\mathcal{D}'}(r_0)$.
In this case the leader $r_0$ has same angle sequence in both the direction $\mathcal{D}$ and $\mathcal{D}'$. So a particular direction can't be agreed upon by the other robots. Thus the target is embedded as described similar to the \textit{Case I}, but in both clockwise and anti-clockwise direction. Observe that there are two possible embedding for this case. 

We say that, target pattern is formed, if $n$ robots are on the $n$ points $T_j$, where $j \in \{0, 1, \dots, n-1\}$, for at least one such embedding.

\section{Algorithm $APF\_CIRCLE$}
\label{Sec:6}
Let us first define the class of configurations called \textit{Formable Configurations $(FC)$}. 
\begin{definition}[Formable Configuration($FC$)]
    \label{def:FC}
    We say that a configuration is an $FC$ if the configuration is one of the following:
    \begin{enumerate}
        \item A single nominee configuration
        \item A double nominee configuration and $S_{Reflect} \ne \phi$
    \end{enumerate}
\end{definition}
 
 In a single nominee $FC$ we will follow the following notation.
 
 In a single nominee configuration if $r_0$ is the leader then the direction in which $r_0$ has the smallest angle sequence is called a \textit{Pivotal direction} and is denoted as $\mathcal{D}_p$. Let $r_i$ be the $i$-th robot from $r_0$ in the direction $\mathcal{D}_p$. Position of any robot $r_i$ is denoted as $R_i$ ( In some cases position of robots $r$ is denoted as $R$). Let $(R_i, R_{i+1})_{\mathcal{D}_p} = \alpha_i$ and $\sum_{i=0}^{n-1} \alpha_i = 2\pi$.

From Proposition~\ref{Prop:rotationallySymmetricImpossible} and Proposition~\ref{prop:reflectionalSymmetryImpossible} it is clear that if a distributed deterministic algorithm for APF has to be designed then only the $FC$s has to be considered as the initial configuration. So here in this section, assuming the initial configuration $\mathcal{C}(0)$ to be an $FC$, we propose an deterministic and distributed algorithm $APF\_CIRCLE$ that solves the arbitrary pattern formation problem on a continuous circle under Semi Synchronous (\textsc{SSync}) without chirality agreement by oblivious and silent robots with rigid move.
Algorithm $APF\_CIRCLE$ consists of several stages. In each  subsection of this section we describe the stages individually and provide the correctness proofs of the stage.   
 Now before moving further into the details of the algorithm let us define some special type of configurations that will be needed later.

\begin{definition}[Rotational Asymmetry Fixing Configuration ($RAFC$)]
    \label{def:RAFC}
     A configuration $\mathcal{C}$ is called a $RAFC$ if all the following conditions holds
     \begin{enumerate}
         \item $\mathcal{C}$ is a single nominee configuration with leader $r_0$.
         % \item $\mathcal{AS}_{\mathcal{D}}(r_0) \ne \mathcal{AS}_{\mathcal{D}'}(r_0)$.
         \item $\alpha_0 < \underset{i \ne 0}{\min}\{\alpha_i, \beta_0\}$ where $\beta_0$ is minimum angle in the target pattern.
     \end{enumerate}
\end{definition}

 \begin{definition}[RAFC Maintaining Configuration ($RMC$)]
     \label{def:RMC} A $RMC$ is a $RAFC$ where $\alpha_1 < \underset{j\ne 0,1}{\min}\{\alpha_j, \beta_0\}$ where $\beta_0$ is the minimum angle in target pattern.
 \end{definition}

 \begin{definition}[Partially Formed Configuration ($PFC$)]
\label{def:PFC}A single nominee configuration, say $\mathcal{C}$, is called a $PFC$ if all robots $r_p$, $p \ge 3$ are on their target location $T_p$ in $\mathcal{C}$.    
\end{definition}

\begin{figure}
\centering

\begin{subfigure}{.45\textwidth}
  \centering
  \includegraphics[height=5.5cm]{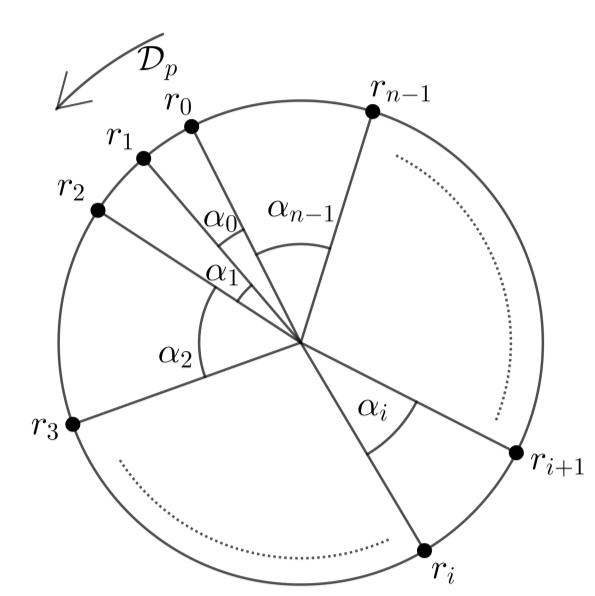}
  \caption{RAFC Maintaining Configuration. Here $\alpha_0< \underset{i\ne0}{\min}\{\alpha_i, \beta_0\}$ and $\alpha_1<~ \underset{j\ne0,1}{\min}\{\alpha_j, \beta_0\}$.}
  \label{fig:rmc}
\end{subfigure}%
\hfill
\begin{subfigure}{.45\textwidth}
  \centering
  \includegraphics[height=5.5cm]{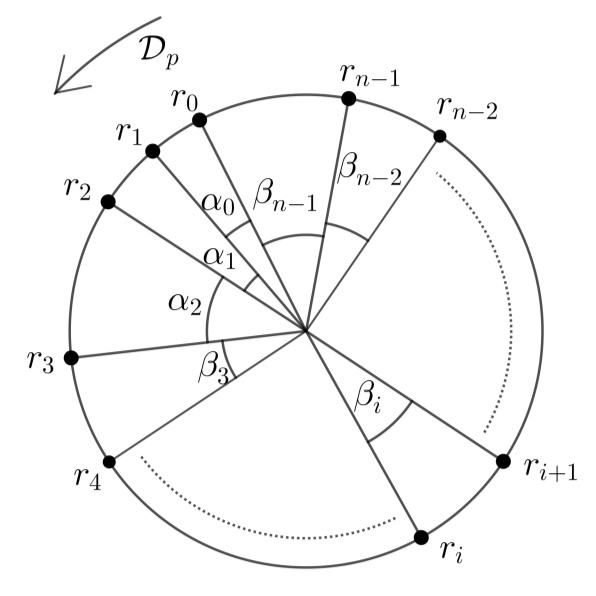}
  \caption{Partially Formed configuration, where all the robots $r_i$, $i\ge3$ are in target.}
  \label{fig:pfc}
\end{subfigure}
\caption{RMC and PFC}
\label{fig:rmcpfc}
\end{figure}

\subsection{A brief outline of the algorithm}
 % For embedding target pattern on circle, first aim is to fix the position of the leader and the pivotal direction. Since the robots have no chirality, when the robots start moving, there is a high chance that the pivotal direction of the leader leader itself might change. Moreover, for arbitrary pattern formation it is necessary to maintain asymmetry throughout the configuration. Thus, the movement of robots is carefully designed here so that the position and the pivotal direction of the leader remains unchanged and asymmetry is maintained throughout the configurations until pattern is formed. 
 The algorithm can be divided into seven stages based on the configuration. Since the robots are oblivious, in each Look-Compute-Move cycle, robots determine in which stage it belongs to by checking some certain conditions. This conditions are described by some Boolean variables in the following table:

\begin{center}
    \begin{tabular}{ |p{1.5cm}|p{5cm}|}
 \hline
 \textbf{Variable} & \textbf{Definition} \\
 \hline
 $c_0$  & Target pattern is formed\\
 \hline
 $c_1$ & Double nominee configuration\\
 \hline
 $c_2$ & $RAFC$\\
 \hline
 $c_3$ & $PFC$\\
 \hline
 $c_4$ & $RMC$\\
 \hline
 $c_5$ & $(R_1,R_2)_{\mathcal{D}_p} > \beta_0-\alpha_0$\\
 \hline
 $c_6$  & all but one robot is in target\\
 \hline
 $c_7$ & $(T_1, R_2)_{\mathcal{D}_p} \le \beta_1$\\
\hline
\end{tabular}
\end{center}

\par The main focus of the problem is to fix the position and the direction of the leader and to keep the leader and its pivotal direction fixed throughout the execution of the algorithm. Another motto is to move the robots to its target without collision. In Stage 1, leader moves an angular distance and makes the unique minimum angle. This minimum angle will be maintained throughout the algorithm so that the rotational asymmetry is maintained. 
% so that the position of the leader becomes fixed. Also in this case leader chooses the pivotal direction ($\mathcal{D}_p$) in which pattern will be formed and
After this the configuration becomes Rotational Asymmetry Fixing Configuration ($RAFC$). In Stage 2, the second neighbour $r_2$ of the leader in pivotal direction, moves an angular distance in direction $\mathcal{D}_p'$ and makes second minimum angle. It is ensured that the second minimum angle will not appear in the configuration throughout the execution of the algorithm. Thus Stage 2 ensures that the leader and its pivotal direction remain unchanged throughout the execution of the algorithm. These two stages are necessary, because if the rotational asymmetry is not maintained, then there might be a time when a configuration becomes a rotationally symmetric configuration and since the robots are oblivious, even if the initial configuration is Formable Configuration, the problem becomes unsolvable. Also if the leader or the pivotal direction changes infinitely often the robots may end up in a live-lock situation without forming target pattern. Note that after stage 2, configuration becomes a $RMC$ configuration.
\par
In stage 3, we ensure that all the robots except $r_0$, $r_1$, $r_2$ moves to their target avoiding collision and form a $PFC$. Since the position of the robots $r_0$ is assumed as the target position $T_0$, so after the $PFC$ formed only $r_1$ and $r_2$ are not in their target. From a $PFC$, for $r_1$ and $r_2$ to move to their target we have stage 4,5,6 and 7.
\par
If $r_2$ is in between $T_2$ and $T_3$, then $r_2$ moves to $T_2$ directly by executing stage 5. Note that, after execute stage 5, only $r_1$ is not in its target. In a $PFC$, if $r_2$ is in between $R_1$ and $T_1$, then $r_1$ can't move to $T_1$ unless $r_2$ moves. Now $r_2$ moves to target $T_2$ if after moving to $T_2$, angle between $R_1$ and $T_2$ is smaller than $\beta_{n-1}$. Otherwise the leader and the pivotal direction may change. For this case when angle between $R_1$ and $T_2$ is greater equal to $\beta_{n-1}$, then $r_2$ does not move to $T_2$ directly, it first moves to a point $D_2$ between $T_1$ and $T_2$ such that the angle between $R_1$ and $D_2$ is less than $\beta_{n-1}$. This ensures after this moves, the leader and its pivotal direction remain unaltered. This is done in stage 4. Note that after this stage , $r_1$ can now move to $T_1$ without collision. This is stage 6. We ensure that, in stage 6 also, the leader and the pivotal direction remain unchanged. After stage 6, either the pattern has been formed or there is only one robot $r_2$ which is not in its target position. For this case or case after the execution of stage 5, only one robot is not in its target. Then the robot which is not in its target, executes stage 7 and moves to its target. After completion of stage 7, the target pattern has been formed.

% We use these variables to distinguish between the different stages. 
% The first challenge that has to be considered while designing the algorithm is to ensure that the configuration remains an $FC$ through out the execution until the target pattern is formed. Thus the rotational asymmetry is also to be maintained throughout the execution of the algorithm. There is another hurdle that also has to be taken care of is to agreement on a particular direction. It might be possible that in a $FC$ the leader $r_0$ has $\mathcal{AS}_\mathcal{D}(r_0) = \mathcal{AS}_\mathcal{D'}(r_0)$. In such configurations an unique direction can not be agreed by the robots. So if target is not already formed  all robots must have to agree on a particular direction to have an unique and global embedding. In a $FC$, if the configuration is a single nominee configuration with leader $r_0$,  $\mathcal{AS}_\mathcal{D}(r_0) \ne \mathcal{AS}_\mathcal{D'}(r_0)$. Thus an unique direction can be agreed upon by the robots, which is the direction in which $r_0$ has the smallest angle sequence. The problem  of occurs when for the leader $r_0$, $\mathcal{AS}_\mathcal{D}(r_0) = \mathcal{AS}_\mathcal{D'}(r_0)$ i.e when the configuration is a double nominee $FC$. 
\subsection{Stage 1}

% To handle the above challenges in the proposed algorithm the leader first moves to form another special type of configuration called \textit{Rotational Asymmetry Fixed Configuration} or $RAFC$. Before defining $RAFC$ let us describe some notations first. In a single nominee configuration if $r_0$ is the leader then the direction in which $r_0$ has the smallest angle sequence is called a \textit{Pivotal direction} and is denoted as $\mathcal{D}_p$. Let $r_i$ be the $i$-th robot from $r_0$ in the direction $\mathcal{D}_p$. Position of any robot $r_i$ is denoted as $R_i$ ( In some cases position of robots $r$ is denoted as $R$). Let $(R_i, R_{i+1})_{\mathcal{D}_p} = \alpha_i$ and $\sum_{i=0}^{n-1} \alpha_i = 2\pi$.

% \begin{definition}[Rotational Asymmetry Fixing Configuration ($RAFC$)]
%     \label{def:RAFC}
%      A configuration $\mathcal{C}$ is called a $RAFC$ if all the following conditions holds
%      \begin{enumerate}
%          \item $\mathcal{C}$ is a single nominee configuration with leader $r_0$.
%          % \item $\mathcal{AS}_{\mathcal{D}}(r_0) \ne \mathcal{AS}_{\mathcal{D}'}(r_0)$.
%          \item $\alpha_0 < \underset{i \ne 0}{\min}\{\alpha_i, \beta_0\}$ where $\beta_0$ is minimum angle in the target pattern.
%      \end{enumerate}
% \end{definition}
In Stage 1 the leader $r_0$ performs the following subroutine.
\\
\textbf{RAFC Formation():} 

\textbf{Input:} $(\neg c_0 \land c_1) \lor (\neg c_0 \land \neg c_1 \land \neg c_2 \land \neg c_3) \lor (\neg c_0 \land \neg c_1 \land \neg c_2 \land c_3 \land \neg c_6)$

\textbf{Output:} $\neg c_0 \land \neg c_1 \land c_2$
\\
If the configuration $\mathcal{C}$ is a single nominee configuration with the leader $r_0$,  then $r_0$ finds a real number $\epsilon_d$ and moves an angular distance  of $\epsilon_d$ in $\mathcal{D}_p$ such that the configuration becomes an $RAFC$. If $\mathcal{C}$ is a double nominee $FC$, the leader  $r_0$ has same view in both the direction. In this case the leader $r_0$ chooses any one of the direction, say $\mathcal{D}$ and moves an angular distance $\epsilon_d$ in $\mathcal{D}$ such that the configuration becomes an $RAFC$.

After $RAFC$ is formed, it is maintained unless $PFC$ is formed. This is needed to avoid  rotational symmetry. 
\begin{figure}
    \centering
    \includegraphics[height=4.5cm]{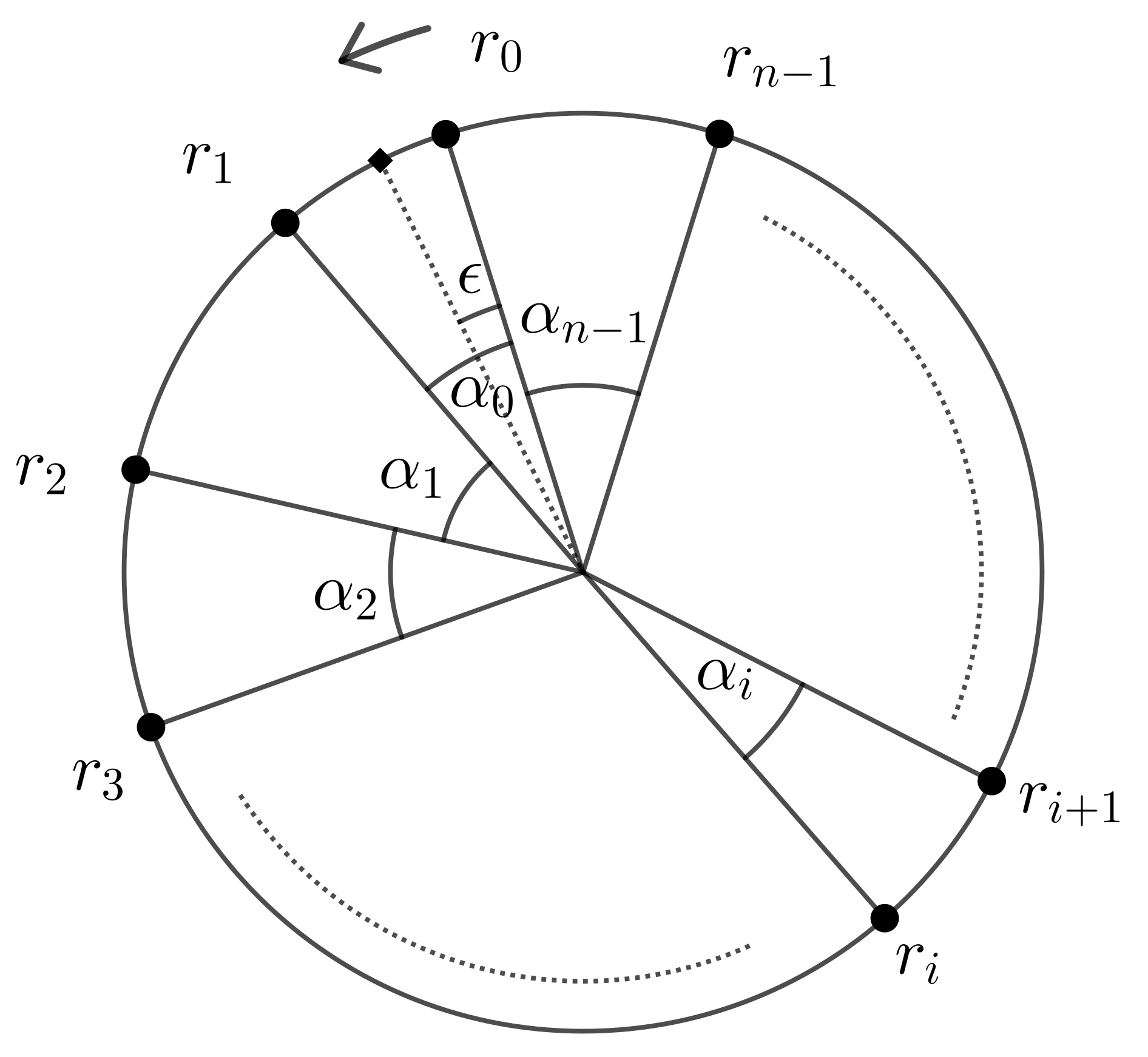}
    \caption{Here the leader $r_0$ moves towards $r_1$ and angular distance $\epsilon_d$ such that after the move the angular distance between $r_0$ and $r_1$ is strictly smallest in the current configuration and also strictly less than all angles in target pattern.}
    \label{fig:Leadermove}
\end{figure}

 The following lemma proves the existence of such $\epsilon_d$.
 
\subsubsection{Correctness of Stage 1}

 \begin{lemma}
     \label{lemma: Existence of epsilon d}
     If an $FC$ is not an $RAFC$, then there exists an $\epsilon_d >0$ such that after a move by the leader $r_0$ according to the subroutine \textit{RAFC Formation()} the configuration becomes an $RAFC$.
 \end{lemma}
 \begin{proof}
 Let the configuration $\mathcal{C}$ be an $FC$ which is not a $RAFC$. Then $\mathcal{C}$ is either a single nominee configuration or a double nominee configuration with $S_{Reflect} \ne \phi$.
 
     \textbf{Case: I}
     Let, the configuration $\mathcal{C}$ be a single nominee configuration with leader $r_0$. Since, $\mathcal{C}$ is not a $RAFC$, $\alpha_0 \ge \underset{i \ne 0}{\min}\{\alpha_i, \beta_0\}$.
     Note that, in $\mathcal{C}$, $\alpha_1 \le \alpha_{n-1}$ as $r_0$ is the nominee. Now for any $\epsilon_d \in (\alpha_0- \underset{i \ne 0}{\min}\{\alpha_i, \beta_0\}, \alpha_0) \subseteq (0, \alpha_0)$, if $r_0$ moves an angular distance $\epsilon_d$ in the direction $\mathcal{D}_p$, then $(R_0, R_1)_{\mathcal{D}_p}$ becomes $\alpha_0 - \epsilon_d$ which is strictly less than $\underset{i \ne 0}{\min}\{\alpha_i, \beta_0\}$ by the choice of the $\epsilon_d$. So now it is enough to show that after this move by $r_0$ the configuration, say $\mathcal{C}'$, remains a single nominee configuration. If possible let the configuration is not a single nominee configuration. Since $\mathcal{C}'$ is rotationally asymmetric, it must be a double nominee configuration where the nominees are the robots $r_0$ and $r_1$. Thus the minimum angle sequences of $r_0$ and $r_1$ in $\mathcal{C}'$ must be same. Now the second angles in the minimum angle sequences of $r_0$ and $r_1$ are  $\alpha_1$ and $\alpha_{n-1}+ \epsilon_d$ respectively. For them to be equal we have, $\epsilon_d = \alpha_1-\alpha_{n-1} \le 0$, which contradicts the choice of $\epsilon_d$. Hence $\mathcal{C'}$ must be a single nominee configuration and hence an $RAFC$. 

     \textbf{Case: II}
     Let the configuration $\mathcal{C}$ be a double nominee configuration with $S_{Reflect} \ne \phi$. Let $r_0 \in S_{Reflect} $ be  the leader in $\mathcal{C}$. Then $\mathcal{AS}_{D} (r_0) =\mathcal{AS}_{D'}(r_0)$. Since $\mathcal{C}$ is not an $RAFC$, $r_0$ moves in any one of clockwise or anticlockwise direction. Let the direction in which $r_0$ decided to move is denoted as $\mathcal{D}$. Let $r_i$ be the $i$-th robot from $r_0$ in the direction $\mathcal{D}$ and $(R_i, R_{i+1})_{\mathcal{D}} = \alpha_i$. Also in this configuration $\alpha_0 \ge \underset{i \ne 0}{\min}\{\alpha_i,\beta_0\}$. Let $\epsilon_d$ is chosen such a way that $\epsilon_d \in (\alpha_0-\underset{i \ne 0}{\min}\{\alpha_i,\beta_0\}, \alpha_0) \subseteq (0, \alpha_0)$ and $ \epsilon_d \ne \alpha_1-\alpha_{n-1}$. Note that existence of such $\epsilon_d$ is guaranteed from the fact that number of robots are finite and the interval from which $\epsilon_d$ is chosen has infinite points. Let $\mathcal{C'}$ be the configuration after $r_0$ moves. By the similar argument as in \textit{Case I}, it can be concluded that $\mathcal{C'}$ is an $RAFC$.

     Hence if a $FC$ is not an $RAFC$ then the leader $r_0$ can always find an $\epsilon_d >0$ such that after a move of angular distance $\epsilon_d$ by $r_0$ according to the subroutine \textit{$RAFC$ formation()} the configuration becomes an $RAFC$. 
 \end{proof}

\subsection{Stage 2}
 Observe that in a $RAFC$, the minimum angle $\alpha_0$ appears only once in the whole configuration. Also, in the algorithm we ensure that this angle is never formed anywhere else in the configuration. This implies that throughout the execution, the configuration always remains rotationally asymmetric until the pattern is formed. But this does not imply that the configuration will remain a $RAFC$ during the execution of the algorithm. This is needed because in an $RAFC$ we get an unique embedding of the target pattern. To do that, the second neighbour of $r_0$ in $\mathcal{D}_p$ in a $RAFC$ (i.e $r_2$) moves in the direction $\mathcal{D}_p'$ to form a \textit{RAFC maintaining configuration} or $RMC$.
 
If a $RAFC$ is not a $RMC$ then Stage 2 will be executed. In stage 2, $r_2$ performs the subroutine \textit{$RMC$ Formation()} described below.
\\
\textbf{RMC Formation():} 

\textbf{Input:} $\neg c_0 \land \neg c_1 \land c_2 \land \neg c_3 \land \neg c_4$

\textbf{Output:} $\neg c_0 \land \neg c_1 \land c_2 \land \neg c_3 \land c_4$
\\
If the configuration $\mathcal{C}$ is a $RAFC$ but not a $RMC$ and target is not formed then $r_2$ moves an angular distance $\epsilon_1$ in the direction $\mathcal{D}_p'$  such that the configuration  becomes an $RMC$.
\subsubsection{Correctness of Stage 2}
In the following lemma existence of such $\epsilon_1$ is guaranteed.
\begin{figure}[H]
     \centering
     \includegraphics[height=4.5cm]{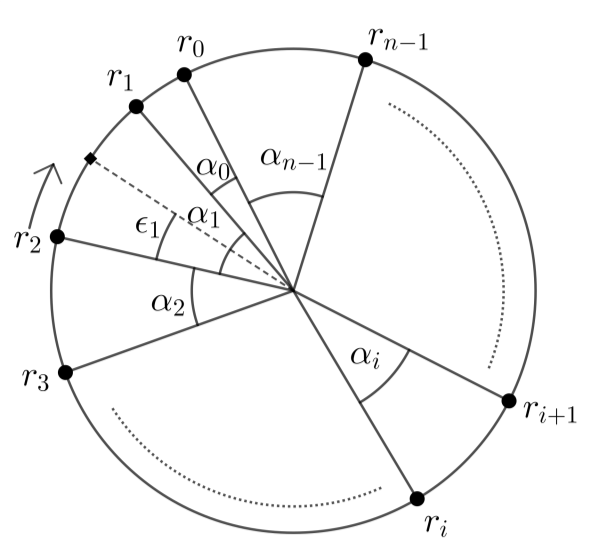}
     \caption{Here $\alpha_1$ is not strictly smaller than other $\alpha_i$s (except $i=0$) or $\beta_j$s. So $r_2$ moves an angular distance $\epsilon_1$ towards $r_1$ such that the new $\alpha_1$ after the move becomes the second uniquely minimum angle of the configuration and also less than all $\beta_j$s. Thus forming a $RMC$ }
     \label{fig:r2Moveanglemin}
 \end{figure} 
\begin{lemma}
    \label{lemma: Existence of epsilon 1}
    If a $RAFC$ is not a $RMC$ then there always exists an $\epsilon_1$ such that $\alpha_1-\alpha_0 > \epsilon_1 >0$ and after the move by $r_2$ of the angular distance $\epsilon_1$ in $\mathcal{D}_p'$ the configuration becomes an $RMC$.  
\end{lemma}
\begin{proof}
  Let $\mathcal{C}$ be a $RAFC$ which is not a $RMC$. Thus, $\mathcal{C}$ is a single nominee configuration with leader $r_0$ and $\alpha_0 < \underset{i \ne 0}{\min}\{\alpha_i, \beta_0\}$ and $\alpha_1 \ge \underset{j \ne 0,1}{\min}\{\alpha_j,\beta_0\}$ where $\beta_0$ is the minimum angle in the target pattern. Now to show the existence of an $\epsilon_1$ in the range $(0, \alpha_1-\alpha_0)$ such that if $r_2$ moves an angle $\epsilon_1$ in the direction $\mathcal{D}_p'$ the configuration becomes an $RMC$, it is enough to find an $\epsilon_1 \in (0,\alpha_1-\alpha_0)$ such that the condition
  $\alpha_1-\epsilon_1 < \underset{j\ne0,1}{\min}\{\alpha_j,\beta_0\}$ is true.

Let us consider $\epsilon_1 = \alpha_1 - \frac{\alpha_0 + \underset{i\ne 0}{\min}\{\alpha_i,\beta_0\}}{2}$.
This implies $\alpha_1-\epsilon_1 = \frac{\alpha_0 + \underset{i\ne 0}{\min}\{\alpha_i,\beta_0\}}{2} > \alpha_0$ (as $\alpha_0 < \underset{i \ne 0}{\min}\{\alpha_i, \beta_0\}$). Also, since $\alpha_0 < \alpha_1$ and $\underset{i \ne 0}{\min}\{\alpha_i, \beta_0\} \le \alpha_1$, $\epsilon_1 >0$. Now we only have to show that, the mentioned condition holds for our chosen $\epsilon_1$.  If possible let the condition does not hold for the chosen $\epsilon_1$. Now since $\underset{i\ne 0}{\min}\{\alpha_i,\beta_0\}\le \underset{j\ne 0,1}{\min}\{\alpha_j,\beta_0\}$,
    We have the following inequality,
    $$
    \frac{\alpha_0+\underset{i\ne0}{\min}\{\alpha_i,\beta_j\}}{2} = \alpha_1-\epsilon_1\ge\underset{j\ne0,1}{\min}\{\alpha_j,\beta_0\}\ge \underset{i\ne 0}{\min}\{\alpha_i,\beta_0\}
    $$
    Which implies, $\alpha_0\ge \underset{i\ne 0}{\min}\{\alpha_i,\beta_0\}$, a contradiction. Hence the condition $\alpha_1-\epsilon_1 < \underset{j\ne0,1}{\min}\{\alpha_j,\beta_0\}$ holds.   
\end{proof}

\subsection{Stage 3}
Note that in a $RMC$, the angles $\alpha_0$ and $\alpha_1$  occurs exactly once in the whole configuration. So, if these angles are not changed the configuration remains a single nominee configuration. Also, the leader and the pivotal direction remains same as we  ensure that these angles will not be formed again during the execution of this stage. So, after the $RMC$ is formed the target embedding remains unique. 

In this stage, the robots $r_i$, where $i \in \{3,4,\dots,n-1\}$ perform the subroutine\\ \textit{$PFC$ formation()} to eventually form a \textit{Partially formed Configuration} or $PFC$.

Before describing the subroutine 
\textit{$PFC$ formation()} we first need to define the term "\textit{Move Ready Robot}".

\begin{definition}[Move Ready Robot]
    In a $RMC$, let $r$ be the first robot from leader, say $r_0$ in the direction $\mathcal{D}_p$ which satisfies the following condition:
    \begin{enumerate}
        \item $r$ is not the first or second neighbour of leader $r_0$ in the direction $\mathcal{D}_p$.
        \item $(R,R')_{\mathcal{D}}-(R,T)_{\mathcal{D}}>\alpha_1$; where $T$ is the destination of $r$ in direction $\mathcal{D}$, $r'$ be the neighbour of $r$ in the direction $\mathcal{D}$, $R$ and $R'$ are the locations of $r$ and $r'$ respectively on the circle.
    \end{enumerate}
    Then $r$ is defined as the Move Ready robot of the configuration
\end{definition}
\textbf{PFC formation():} 

\textbf{Input:} $\neg c_0 \land \neg c_1 \land c_2 \land \neg c_3 \land c_4$

\textbf{Output:} $\neg c_0 \land \neg c_1 \land c_2 \land c_3$
\\
If a configuration $\mathcal{C}$ is not the target pattern and $\mathcal{C}$ is a $RMC$ which is not a $PFC$ then the move ready robot, say $r_p$ moves to the target position $T_p$. 
\\

\subsubsection{Correctness of Stage 3}
Observe that during this procedure no angle is created which is less or equal to $\alpha_0$ or $\alpha_1$. So during this procedure, the configuration remains a $RMC$ and the leader and the pivotal direction does not change.

We now have to ensure further that From a $RMC$ a $PFC$ will be formed eventually. To do so we have to prove that in a $RMC$ which is not a $PFC$, there will always be a robot which is the move ready robot. This will remove the possibility of a deadlock situation during the \textit{$PFC$ formation()} procedure. Thus in each round during the execution of this procedure, one of $r_p$ reaches its target position $T_p$ where $p \ge 3$. This implies \textit{$PFC$ formation()} runs for at most $n-3$ rounds and within this, the configuration will become a $PFC$. Now, to prove that a $RMC$ which is not a $PFC$ will have a robot which is the move ready robot, we have to first prove the following lemma.

\begin{lemma}
\label{lemma:Not Move Ready Same Direction}
    In a $RMC$ which is not a $PFC$, if a robot $r_i$, $i\ge 3$ is not a Move Ready robot and the destination of $r_i$ i.e., $T_i$ is in the direction $\mathcal{D}$ from $R_i$ then, the neighbor of $r_i$ in the direction $\mathcal{D}$, say $r_k$, must also have its target destination $T_k$ in direction $\mathcal{D}$ from $R_k$.
\end{lemma}
\begin{proof}
    Let $r_i$ ($i \ge 3$) be a robot that is not Move Ready and its destination $T_i$ is in direction $\mathcal{D}$ from $R_i$. Let $r_k$ be the neighbour of $r_i$ in the direction $\mathcal{D}$ ($k$ can be either $i+1$ or, $i-1$ in modulo n). Since $r_i$ is not move ready, $(R_i,R_k)_{\mathcal{D}}-(R_i, T_i)_{\mathcal{D}} \le \alpha_1$. Now there can be two possibilities. Either, $(R_i,R_k)_{\mathcal{D}}-(R_i, T_i)_{\mathcal{D}} \le 0$ or, $0 < (R_i,R_k)_{\mathcal{D}}-(R_i, T_i)_{\mathcal{D}} \le \alpha_1$. Now, if possible let the destination of $r_k$ i.e., $T_k$ be in the direction $\mathcal{D}'$ from $R_k$.

    \textbf{Case 1:} Let $(R_i,R_k)_{\mathcal{D}}-(R_i, T_i)_{\mathcal{D}} \le 0$. This implies $T_i$ is further than $R_k$ in the direction $\mathcal{D}$, from $R_i$ (Fig.\ref{fig:lemma5a}). Now consider $k = i+1$ and thus the $\mathcal{D}= \mathcal{D}_p$. Note that $i$ then can not be $n-1$ as $T_{n-1}$ can not be further than $R_0=T_0$ in the direction $\mathcal{D}_p$ from $R_{n-1}$ according to the target embedding. Now, for all other values for $i \ge 3$, if $T_{i+1}$ is in the direction $\mathcal{D}_p'$ from $R_{i+1}$, then $T_{i+1}$ appears before $T_i$ in the direction $\mathcal{D}_p$ in the embedding which is contradiction. Similarly, let us consider $k= i-1$ and thus $\mathcal{D} = \mathcal{D}_p'$. Here note that $i$ can not be $3$ as otherwise $r_3$ is Move Ready. This is because $T_3$ and $T_2$ must be on the arc from $R_2$  to $R_3$ in the direction $\mathcal{D}_p$. Thus $T_3$ can not be further than $R_2$ from $R_3$ in the direction $\mathcal{D}_p'$ as, $(T_3,R_2)_{\mathcal{D}_p'} > (T_3, T_2)_{\mathcal{D}_p'} =\beta_2 > \alpha_1 >0$ . Now for all other values of $i >3$, it can be shown that we will arrive at a contradiction by a similar argument as in the case where $k=i+1$ has been considered.

\begin{figure}
\centering
\begin{subfigure}{.5\textwidth}
  \centering
  \includegraphics[height=6.5cm]{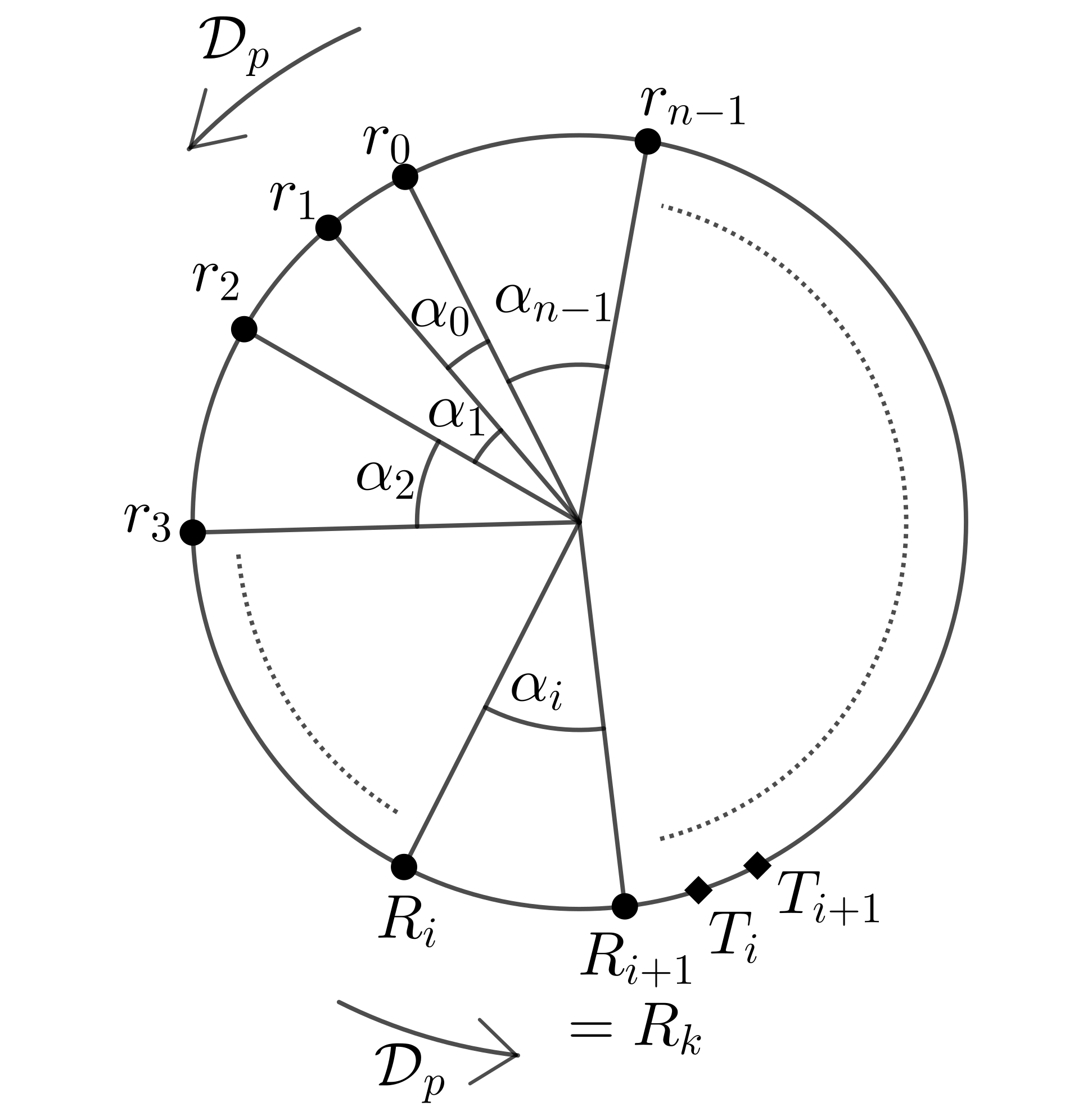}
  \caption{case:1}
  \label{fig:lemma5a}
\end{subfigure}%
\hfill
\begin{subfigure}{.5\textwidth}
  \centering
  \includegraphics[height=6.5cm]{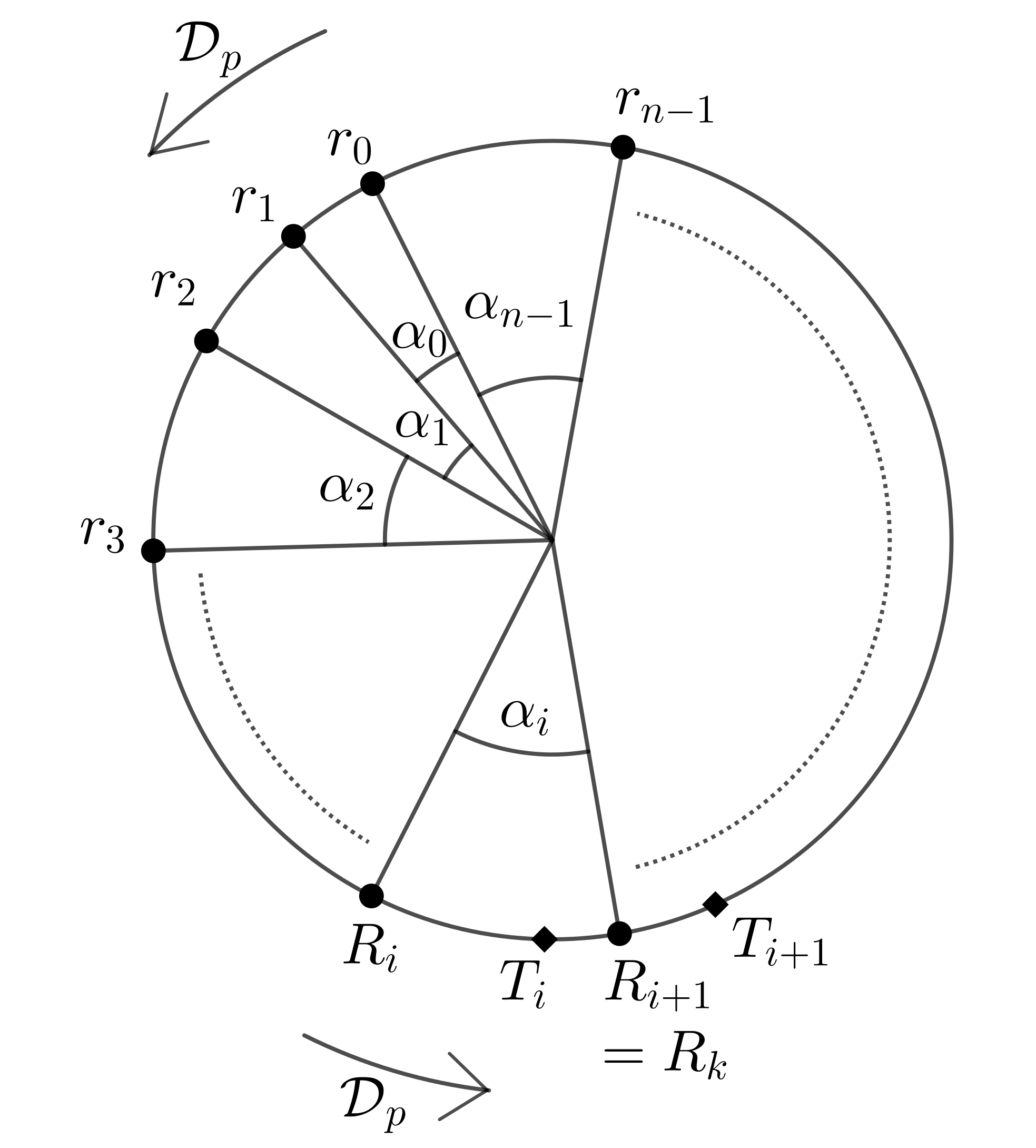}
  \caption{case:2}
  \label{fig:lemma5b}
\end{subfigure}
\caption{If the target destination $T_i$ of the robot $r_i$ is in the direction $\mathcal{D}$, then the target destination $T_{i+1}$ of its neighbour $r_{i+1}$ is also in the same direction $\mathcal{D}$. }
\label{fig:test}
\end{figure}

    \textbf{Case 2:} Let $0 < (R_i,R_k)_{\mathcal{D}}-(R_i, T_i)_{\mathcal{D}} \le \alpha_1$. This implies $R_k$ is further than $T_i$ from $R_i$ in the direction $\mathcal{D}$ but, $(T_i, R_k)_{\mathcal{D}} \le \alpha_1$ (Fig.~\ref{fig:lemma5b}). Let $k= i+1$ and hence $\mathcal{D}= \mathcal{D}_p$ ($i$ can not be $n-1$ as shown earlier in case 1). Now, according to the embedding $T_i$ can not be further than $T_{i+1}$ from $T_0$ in the direction $\mathcal{D}_p$. Hence, $T_{i+1}$ must be on the arc joining from $T_i$ to $R_{i+1}$ in the direction $\mathcal{D}_p$. This implies $\beta_0 \le (T_i, T_{i+1})_{\mathcal{D}_p} \le (T_i, R_{i+1})_{\mathcal{D}_p} \le \alpha_1 \implies \beta_0 \le \alpha_1$, a contradiction due to the fact that the configuration is a $RFC$. Similarly if $k=i-1$ and hence the direction $\mathcal{D}= \mathcal{D}_p'$ then again we will arrive at a contradiction by a similar argument. 

    Since for both the possibilities we arrive at a contradiction, $T_k$ must also be in the direction of $\mathcal{D}$ from $R_k$.
    
\end{proof}

\begin{lemma}
\label{lemma:ATleast one Move Ready}
    If a $RMC$ is not a $PFC$ then there exists a robot $r_p$ which is Move Ready.
\end{lemma}
\begin{proof}
A robot is called terminated if it has already reached its target. If possible let in a $RFC$ the robots $r_i$ ($i \ge 3$) are either terminated or not Move Ready (existence of such robot is guaranteed by the fact that the configuration is not a $PFC$). Let $r_k$ be a  robot from $R_0$ in the direction $\mathcal{D}_p$ which has not terminated and is not Move Ready. Let the target of $r_k$ i.e., $T_k$ be in a direction $\mathcal{D}$ from $R_k$. Observe that if $r_k = r_3$, then $\mathcal{D} = \mathcal{D}_p$. Otherwise, since $T_2$ is in the direction $\mathcal{D}_p$ from $R_2$, $(R_3,R_2)_{\mathcal{D}_p'}- (R_3, T_3)_{\mathcal{D}_p'} = (R_2,T_3)_{\mathcal{D}_p}\ge (T_2,T_3)_{\mathcal{D}_p} = \beta_2 > \alpha_1$ and hence $r_3$ becomes Move Ready. Similarly if $r_k = r_{n-1}$ then, $T_{n-1}$ must be in the direction $\mathcal{D}_p'$ from $R_{n-1}$. Otherwise, $T_{n-1}$ must lie on the arc joining $R_{n-1}$ and $T_0 = R_0$ in the direction $\mathcal{D}_p$ which implies $(R_{n-1},R_0)_{\mathcal{D}_p}-(R_{n-1},T_{n-1})_{\mathcal{D}_p} = (T_{n-1},T_0)_{\mathcal{D}_p} = \beta_{n-1} > \alpha_1$ a contradiction. 

Now we claim that, for a robot $r_i, (i \ge 3)$ which has not terminated and is not move ready, if the direction of its target is in the direction $\mathcal{D}$ from $r_i$, then the neighbor of $r_i$ , say $r_j$ in the direction $\mathcal{D}$ must have not terminated also. Otherwise, if $r_j$ is terminated then it must be on $T_j$. Also, $T_i$ must be on the arc joining the points from $R_i$ to $T_j$ in the direction $\mathcal{D}$. This implies $(R_i,R_j)_{\mathcal{D}}-(R_i,T_i)_{\mathcal{D}} = (T_i,T_j)_{\mathcal{D}} = \beta_t >\alpha_1$ ($t \in \{i,j\}$) and thus $r_i$ becomes move ready contrary to the assumption. 

So, now for a robot $r_{k_1}$ which is not Move Ready and has not terminated yet, let $\mathcal{D}$ be the direction of $T_{k_1}$ from $R_{k_1}$ ($R_{k_1}$ is the location of $r_{k_1}$ on the circle). Also let $r_{k_2}$ be the neighbour of $r_{k_1}$ in the direction $\mathcal{D}$. By Lemma~\ref{lemma:Not Move Ready Same Direction} and the above claim $r_{k_2}$ must have not terminated yet and the direction of $T_{k_2}$ must be in the direction $\mathcal{D}$ from $R_{k_2}$ ($R_{k_2}$ is the location of $r_{k_2}$ on the circle). Now by mathematical induction, it can be shown that all robots $r_i$ ($i \in \{3,4,\dots,n-1\}$) in the direction $\mathcal{D}$ from $r_k$, must have not terminated and are not Move Ready. So either $r_3$ or $r_{n-1}$ must be not Move Ready and has not terminated. If $r_3$ is not move ready and has not terminated then the direction of $T_3$ must be $\mathcal{D}_p$ from $R_3$ and then by induction it can be shown that $r_{n-1}$ must also be not Move ready and has not terminated and direction of $T_{n-1}$ must be in $\mathcal{D}_p$ from $R_{n-1}$ which is a contradiction. Similarly, if $r_{n-1}$ is not Move ready and has not terminated then $T_{n-1}$ must be in the direction $\mathcal{D}_p'$ from $R_{n-1}$ which will imply $r_3$ is not Move Ready and is not terminated and $T_3$ must be in direction $\mathcal{D}_p'$ from $R_3$. which is again a contradiction. Hence in a $RMC$ which is not a $PFC$ there always exists a robot which is the move ready robot in the configuration.    
\end{proof}

\subsection{Stage 4}
After the configuration becomes a $PFC$ after completion of Stage 3, only the robots $r_1$ and $r_2$ are not in their target locations. In this scenario the condition $\neg c_6$ true. Now, if $(R_1,R_2)_{\mathcal{D}_p} \le \beta_0-\alpha_0$, 
then the robot $r_2$ performs the procedure \textit{ R2Move1()}.
\\
\textbf{R2Move1():}

\textbf{Input:} $\neg c_0 \land \neg c_1 \land c_2 \land c_3 \land \neg c_6 \land \neg c_5$

\textbf{Output:} $(\neg c_0 \land \neg c_1 \land c_2 \land c_3 \land \neg c_6 \land c_5) \lor (\neg c_0 \land \neg c_1 \land c_2 \land c_3 \land c_6)$
\\
If $\beta_0 +\beta_1-\alpha_0 < \beta_{n-1}$ then $r_2$ moves to $T_2$ otherwise chose a $\delta \in (0, \beta_{n-1}-\beta_0+\alpha_0)$ and move to an angular distance $\alpha_0 +\beta_{n-1}- \delta$ from $R_0 = T_0$ in $\mathcal{D}_p$.

\subsubsection{Correctness of Stage 4}

\begin{lemma}
\label{lemma:Move of R2 dont Change Leader}
    If $r_2$ executes the procedure \textit{R2Move1()} then the configuration remains a single nominee configuration where the leader and pivotal direction does not change.
\end{lemma}
\begin{proof}
When $r_2$ executes \textit{R2Move1()} during stage 4, $r_1$ is not on $T_1$, and $r_2$ moves to $D_2$. Now, either $D_2=T_2$ or, it is a point on the circle such that,  $(R_1,D_2)_{\mathcal{D}_p}= (R_0,D_2)_{\mathcal{D}_p}-(R_0,R_1)_{\mathcal{D}_p}= \alpha_0+\beta_{n-1}-\delta -\alpha_0 =\beta_{n-1}-\delta$, where $0 <\delta<\beta_{n-1}-\beta_0+\alpha_0$. As $\alpha_0<\alpha_1<\beta_0\le\beta_1$, then the destination $D_2$ of $r_2$ must be on the arc joining from point $R_2$ to point $R_3=T_3$ in the direction $\mathcal{D}_p$. 

\textit{Case-I:} If $(R_1,T_2)_{\mathcal{D}_p}=\beta_0+\beta_1-\alpha_0<\beta_{n-1}$, then $r_2$ moves to its target $T_2$. Then the angle sequence of $r_0$ in the pivotal direction remains uniquely minimum, as $(R_1,T_2)_{\mathcal{D}_p}<\beta_{n-1}$. Thus in this case the configuration remains a single nominee configuration and also the leader and the pivotal direction does not change.

\textit{Case-II:} If $(R_1,T_2)_{\mathcal{D}_p}=\beta_0+\beta_1-\alpha_0 \ge \beta_{n-1}$, then $r_2$ moves to a point $D_2$ in the direction $\mathcal{D}_p$ such that $(R_1,D_2)_{\mathcal{D}_p}$ must be $\beta_{n-1}-\delta$. Then the minimum angle sequence of the configuration remains unique and belongs to $\mathcal{AS}(r_0)$ and the pivotal direction also remains same as $\beta_{n-1}>\beta_{n-1}-\delta$, for any $\delta \in (0, \beta_{n-1}-\beta_0+\alpha_0)$. So the configuration remains a single nominee configuration and also the leader and the pivotal direction does not change. 
\end{proof}

\begin{lemma}
\label{lemma:r1 Moves}
    If $r_2$ executes \textit{R2Move1()} in stage 4 and $R_2$ is the new position of $r_2$ after the move then the condition $(R_1,R_2)_{\mathcal{D}_p} > \beta_0-\alpha_0$ must become true.
\end{lemma}
\begin{proof}
    Let $r_2$ executes \textit{R2Move1()} in stage 4 and let $R_2$ be the new position of $r_2$ after the move. Now $R_2$ is either $T_2$ or, the point  $D_2$ such that $(R_1,D_2)_{\mathcal{D}_p} = \beta_{n-1}-\delta$ for some $\delta \in (0, \beta_{n-1}-\beta_0+\alpha_0)$.
    
    \textit{Case-I:} Let $R_2$ is $T_2$. We now have to show that $(R_1, T_2)_{\mathcal{D}_p} > \beta_0-\alpha_0$.
    If $(R_1, T_2)_{\mathcal{D}_p} \le \beta_0-\alpha_0$ then, $\beta_0+\beta_1-\alpha_0 \le \beta_0-\alpha_0 \implies  \beta_1\le 0 $ which is a contradiction. Hence  $(R_1, T_2)_{\mathcal{D}_p} > \beta_1$.

    \textit{Case-II:} Let $R_2$ is $D_2$ such that $(R_1, D_2)_{\mathcal{D}_p} = \beta_{n-1}-\delta$, where $\delta \in (0, \beta_{n-1}-\beta_0+\alpha_0)$. This implies $(R_1,D_2)_{\mathcal{D}_p} = \beta_{n-1}-\delta > \beta_0-\alpha_0$ and hence the result.
    
\end{proof}

\subsection{Stage 5}

Stage 5 is executed if target is not already formed and the current configuration is a $RAFC$ and a $PFC$ with $(R_1, R_2)_{\mathcal{D}_p} > \beta_0-\alpha_0$ and $(T_1, R_2) > \beta_1$ i.e., $r_2$ is located in the arc joining $T_2$ and $T_3$ in the direction $\mathcal{D}_p$.
In this stage the robot $r_2$ executes the procedure \textit{R2MoveReverseToT2()}.
We describe the procedure in the following.
\\
\textbf{R2MoveReverseToT2():}

\textbf{Input:} $\neg c_0 \land \neg c_1 \land c_2 \land c_3 \land \neg c_6 \land  c_5 \land \neg c_7$

\textbf{Output:} $\neg c_0 \land \neg c_1 \land c_2 \land c_3 \land c_6$
\\
In this stage the robot $r_2$ moves to $T_2$ in the direction $\mathcal{D}_p'$.

\subsubsection{Correctness of Stage 5}

\begin{lemma}
    \label{lemma:stage5} If $r_2$ executes the procedure \textit{R2MoveReverseToT2()} in stage 5, the configuration remains a single nominee configuration where leader and the pivotal direction does not change.
\end{lemma}
\begin{proof}
    Let the configuration becomes a double nominee configuration after $r_2$ executes \textit{R2MoveReverseToT2()} in stage 5 during some round, say $t$. Since the configuration remains an $RAFC$ after completion of the round only $r_0$ and $r_1$ can be the nominees. Let $\mathcal{D}_p$ be the pivotal direction at the beginning of the round $t$. Then $\mathcal{AS}_{\mathcal{D}_p}(r_0)$ decreases more after the completion of round $t$. Now if the new $\mathcal{AS}_{\mathcal{D}_p}(r_0) = \mathcal{AS}_{\mathcal{D}_p'}(r_1)$ then, at the beginning of round $t$, $\mathcal{AS}_{\mathcal{D}_p}(r_0) > \mathcal{AS}_{\mathcal{D}_p'}(r_1)$. This is a contradiction. Hence, $r_1$ can not be a nominee and thus the configuration remains a single nominee configuration and the leader as well as the pivotal direction remains same.
\end{proof}

\subsection{Stage 6}

This stage is executed if the target pattern is not already formed and the current configuration is a $RAFC$ and a $PFC$ along with $(R_1, R_2)_{\mathcal{D}_p} > \beta_0-\alpha_0$ and $(T_1,R_2)_{\mathcal{D}_p} \le \beta_1$. In this stage the robot $r_1$ executes the procedure \textit{R1MoveToTarget()}. We describe the procedure in the following.
\\
\textbf{R1MoveToTarget():}

\textbf{Input:} $\neg c_0 \land \neg c_1 \land c_2 \land c_3 \land \neg c_6 \land  c_5 \land  c_7$

\textbf{Output:} $\neg c_0 \land \neg c_1  \land c_3 \land c_6$
\\
In this procedure $r_1$ moves to $T_1$.

\subsubsection{Correctness of Stage 6}

\begin{lemma}
    \label{lemma: r1 no collision}
    By following the procedure \textit{R1MoveToTarget()}, $r_1$ can move to $T_1$ without collision.
\end{lemma}
\begin{proof}
  Collision occurs only when $(R_1,T_1)_{\mathcal{D}_p} \ge (R_1,R_2)_{\mathcal{D}_p} > \beta_0-\alpha_0$.
  This implies, $\beta_0-\alpha_0 > \beta_0-\alpha_0$. Thus we reach a contradiction. Hence $r_1$ moves to $T_1$ by executing \textit{R1MoveToTarget()} in stage 6 without collision.
\end{proof}
After completion of stage 6, two possible things can happen. Either the target is formed or, all robot but $r_2$ are in their corresponding  target position . Now for the latter case, we have to ensure that the configuration after $r_1$ executes \textit{R1MovesToTarget()}, remains a single nominee configuration. Otherwise, an unique embedding can not be agreed upon by the robots. Since $(T_1,R_2)_{\mathcal{D}_p} < \beta_1$ (for the latter case) before the execution of stage 6, $r_2$ must be located on the arc joining $T_1$ and $T_2$ in the direction $\mathcal{D}_p$ but not on $T_2$. Let after $r_1$ moves to $T_1$ during Stage 6, $(T_1,R_2)_{\mathcal{D}_p} = \beta_1-\epsilon$, and $(R_2,T_3)_{\mathcal{D}_p} = \beta_2+\epsilon$ for some $\epsilon > 0$ and $\mathcal{D}_p$ being the pivotal direction before execution of stage 6 by $r_1$.   Now since in this stage the configuration deviates from being a $RAFC$, we first have to ensure that after completion of this stage rotational symmetry does not occur. The following lemma ensures it.
\begin{lemma}
    If $r_1$ performs \textit{R1MoveToTarget()} in stage 6, the configuration does not become rotationally symmetric.
\end{lemma}
\begin{proof}
    If possible let the configuration become rotationally symmetric after the movement of $r_1$. Then there is another robot which has an angle sequence same as $r_0$.Let $r_i$ be that robot.

    \textit{Case I:} $i \ge 3$. In this case $r_i$ will have strictly smallest angle sequence in the target pattern, which is not true according to our embedding.
    
    \textit{Case II:} $i=1$. For this case, all angles in the configuration is $\beta_0$. This gives $(R_2,R_3)_{\mathcal{D}_p}=(R_2,T_3)_{\mathcal{D}_p} = \beta_0$. Thus, $\beta_2+\epsilon=\beta_0$, implies $\beta_2<\beta_0$. This is a contradiction.

    \textit{Case III:} $i=2$. For this case, similar to the previous case, $(R_2,R_3)_{\mathcal{D}_p}=\beta_0$, which similarly leads to contradiction. 
    
    Therefore, the configuration remains rotationally asymmetric. 
\end{proof}

 Now, for a unique embedding we also have to ensure that the new configuration is a single nominee configuration. For that we have the following lemma.
\begin{lemma}
     If $r_1$ performs \textit{R1MoveToTarget()} in stage 6, the configuration remains a single nominee configuration. Also the leader and the pivotal direction does not change.
\end{lemma}
\begin{proof}
    % For any robot which has the angle $\beta_0$ as the first angle in its angle sequence then the second angle of it must be greater than or equal to $\beta_1$. But the first two consecutive angles of $r_0$ are $\beta_0$ and $\beta_1-\epsilon$, giving $r_0$ has strictly smallest angle sequence in $\mathcal{D}_p$ direction. Thus the configuration remains single nominee configuration.
    If the configuration after $r_1$ executes stage 6 becomes a double nominee configuration then $r_0$ and $r_1$ can only be the nominees. This is because no other robot has $\beta_0$ and $\beta_1-\epsilon$ as the first two terms of its angle sequence. Also note that, first two terms of $\mathcal{AS}_{\mathcal{D}_p}(r_0)$ and $\mathcal{AS}_{\mathcal{D}_p}(r_1)$ respectively are $(\beta_0, \beta_1-\epsilon)$ and $(\beta_0,\beta_{n-1})$ ($\mathcal{D}_p$ is the pivotal direction before execution of stage 6). Now since $\beta_{n-1} \ge \beta_1 > \beta_1-\epsilon$, $\mathcal{AS}_{\mathcal{D}_p}(r_0) < \mathcal{AS}_{\mathcal{D}_p}(r_1)$.  Thus we arrive at a contradiction. Hence after $r_1$ executes \textit{R1MoveToTarget()} in stage 6 the configuration remains a single nominee configuration with same leader and the pivotal direction.
\end{proof}

\subsection{Stage 7}
This stage executes only when the configuration is a single nominee configuration and all but one robot are at their corresponding target locations.
During this stage the robot that is not at the target moves to target by executing the procedure \textit{Target Formation()}.
\\
\textbf{Target Formation():}

\textbf{Input:} $\neg c_0 \land \neg c_1 \land c_3 \land c_6$

\textbf{Output:} $c_0$
\\
the robot which is not in target moves to its target location thus forming the pattern.

Note that in the worst case all stages but Stage 3 takes only one epoch too terminate. And stage 3 takes at most $n-3$ epochs to terminate . Also observe in the algorithm flowchart (Fig.~\ref{fig:flowchart1}, Fig.~\ref{fig:flowchart2}) that no stage is executed more than once. Also in the whole execution of the algorithm if stage 5 is executed then stage 6 will not be executed and vice versa. So the total time taken by algorithm $APF\_CIRCLE$ to terminate is $n+2$ epochs. 
Thus we have the following theorem stating the correctness of the algorithm $APF\_CIRCLE$.
\begin{theorem}
    \label{theorem:correctness}
    Algorithm $APF\_CIRCLE$ can solve arbitrary pattern formation problem on a continuous circle with oblivious and silent swarm of robots without chirality and under a semi synchronous scheduler from any Formable Configuration (FC) within $O(n)$ epochs where $n$ is the number of robots in the swarm.
\end{theorem}
\begin{figure}[H]

    \centering
    \includegraphics[height=7cm,width=15cm]{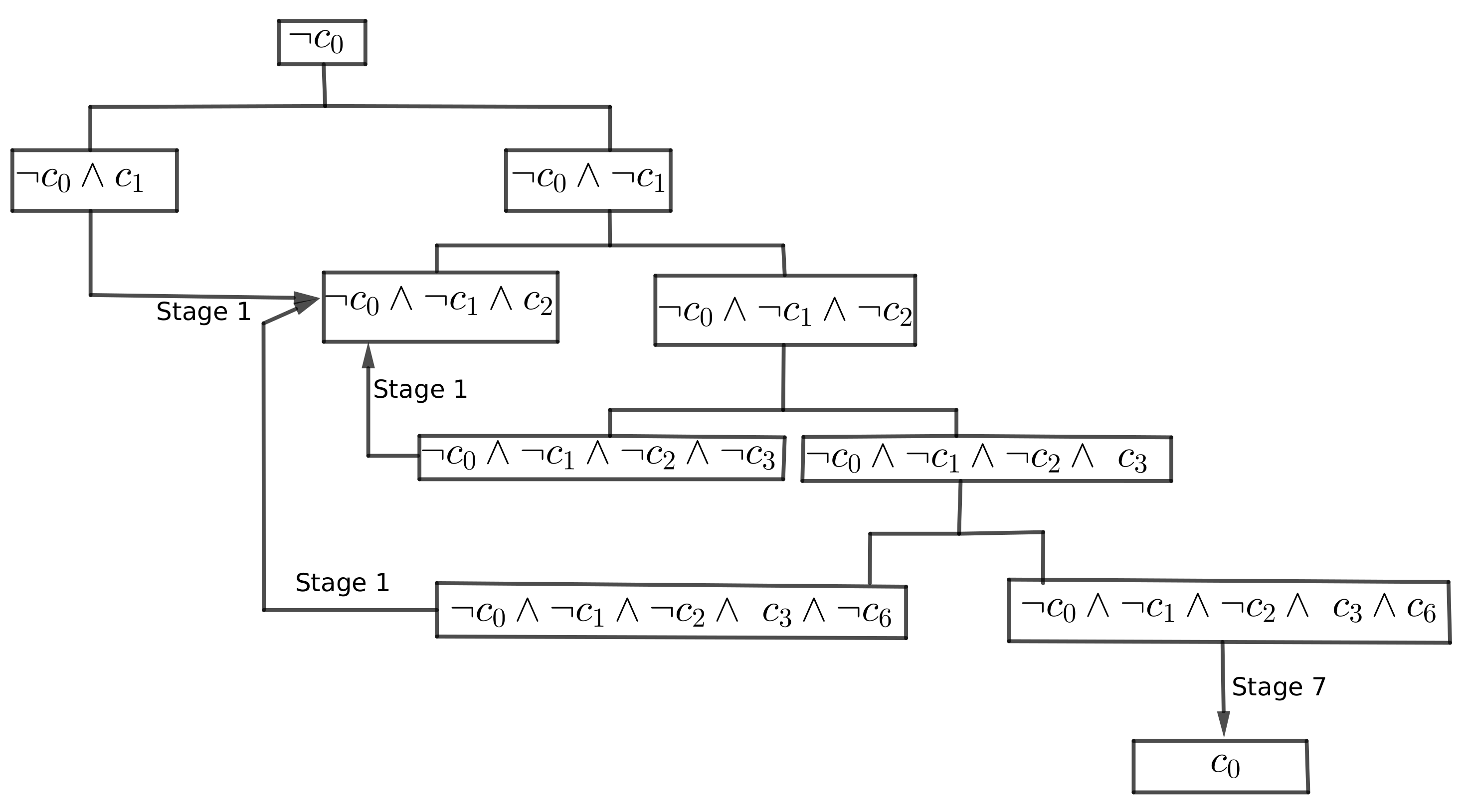}
    \caption{Algorithm Flowchart: part 1}
    \label{fig:flowchart1}
\end{figure}
\begin{figure}[H]

    \centering
    \includegraphics[height=10cm, width=15cm]{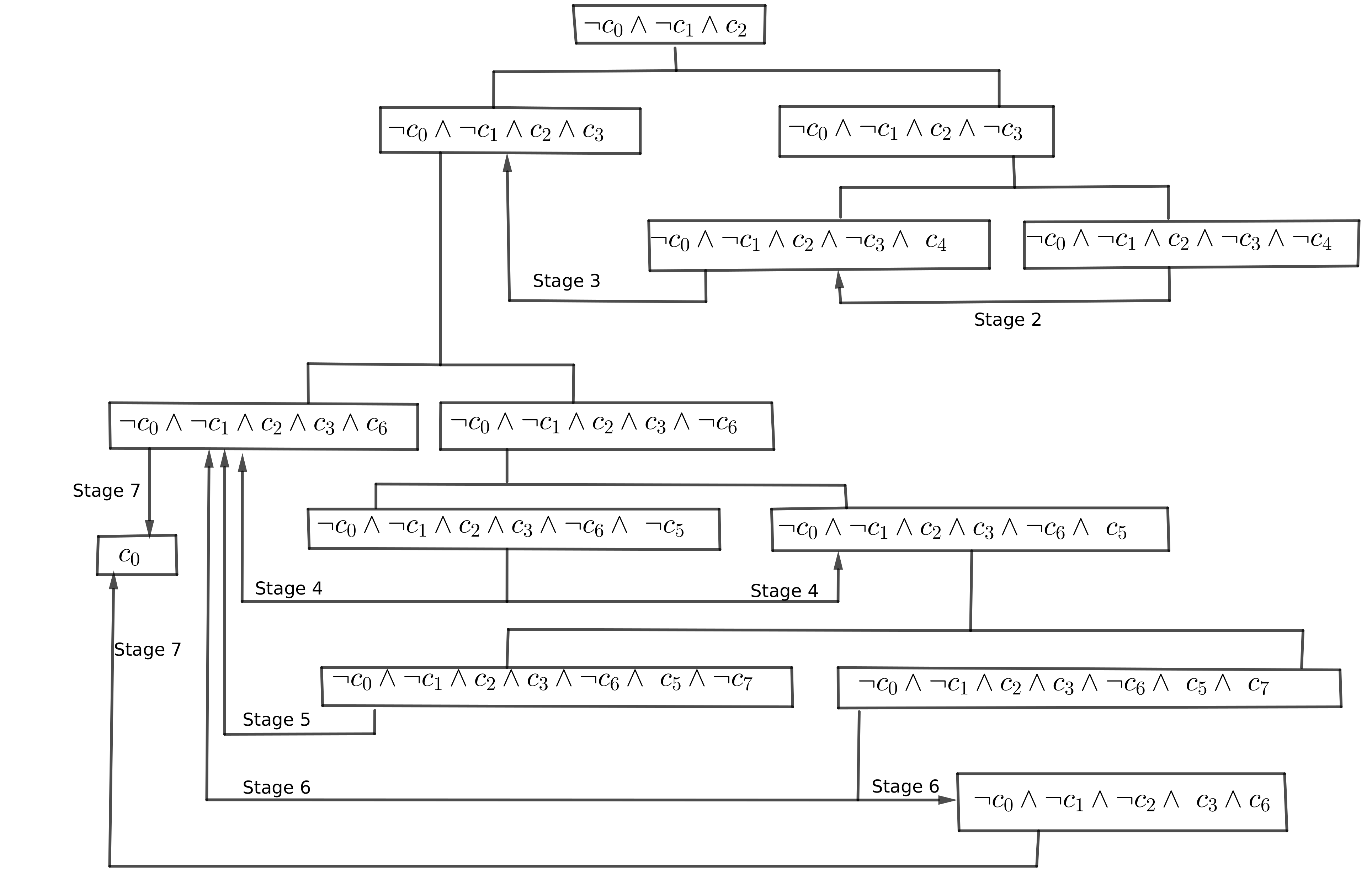}
    \caption{Algorithm Flowchart: part2}
    \label{fig:flowchart2}
\end{figure}
\section{Conclusion}
\label{Sec:7}
The arbitrary pattern formation problem is a classical problem in the field of swarm robotics. Till now it has been studied considering the euclidean plane and some discrete domains mostly. In continuous domains, there are certain environments that restrict the movement of the robot in any direction. Any closed curve embedded on a plane is an example of this. In the real world, this kind of environment can be seen everywhere, for example, train lines, road networks, etc. It can be argued that a problem solvable in a continuous circle can be solved on any closed curve. So here, in this paper, this problem has been introduced on a continuous circle for the first time.  Her in this work we have completely characterized the class of initial configurations for which arbitrary pattern formation problem is solvable in a deterministic method and then provided an deterministic and distributed algorithm $APF\_CIRCLE$ which solves the APF problem for any solvable configuration considering the robots to be oblivious, silent and without chirality under a semi synchronous scheduler.

For the days ahead, it would be really interesting if this problem can be solved under an asynchronous scheduler. Also, another interesting thing  would be to find out if there is an initial configuration and a target configuration such that for any embedding of the target the time taken by the $n$ robots to form the target is $O(n)$ or not. If this lower bound is $O(n)$ then the algorithm presented here is time optimal otherwise another time-optimal algorithm has to be designed. One can also consider the limited visibility model to study this problem to extend this research further.  

\bibliography{apfcir}

\newpage
\end{document}